\Crefname{figure}{Fig.}{Figs.}
\newcommand{\superimpose}[2]{{%
    \ooalign{%
      \hfil$\m@th#1\@firstoftwo#2$\hfil\cr
      \hfil$\m@th#1\@secondoftwo#2$\hfil\cr
    }%
  }}
\newcommand{\m}[1]{\mathsf{#1}}
\newcommand{\mr}[1]{\mathrel{#1}}
\newcommand{\smallparallel}{\mathchoice%
  {\raisebox{1pt}{\scaleobj{.6}{\parallel}}} 
  {\raisebox{1pt}{\scaleobj{.6}{\parallel}}} 
  {\raisebox{.8pt}{\scaleobj{.45}{\parallel}}} 
  {\raisebox{.5pt}{\scaleobj{.35}{\parallel}}} 
}
\newcommand{\pto}{\mr{\mathpalette\superimpose{%
 {\smallparallel\kern.1em}{\to}}}}
\newcommand{\rpto}{\mr{\rotatebox[origin=c]{-180}{%
 $\mathpalette\superimpose{{\smallparallel\kern.1em}{\to}}$}}}
\newcommand{\rptoa}[1][]{\mr{\vphantom{\pto}^{#1}%
 \rotatebox[origin=c]{-180}{$\mathpalette\superimpose{%
 {\smallparallel\kern.1em}{\to}}$}}}
\newcommand{\smallcirc}{\mathchoice%
  {\raisebox{.5pt}{\scaleobj{.8}{\circ}}} 
  {\raisebox{.5pt}{\scaleobj{.8}{\circ}}} 
  {\raisebox{.8pt}{\scaleobj{.45}{\circ}}} 
  {\raisebox{.5pt}{\scaleobj{.35}{\circ}}} 
}
\newcommand{\mto}{\mr{\mathpalette\superimpose{%
 {\smallcirc\kern.1em}{\to}}}}
\newcommand{\union}{\mathop{\cup}}
\newcommand{\h}[1][.2]{\hspace{#1mm}}
\newcommand{\iN}{\h\in\h}
\newcommand{\geqslanT}{\geqslant\h}
\newcommand{\RB}[2][1]{\raisebox{#1mm}{#2}}
\newcommand{\MC}[2][0]{\makebox[#1mm]{#2}}
\newcommand{\ML}[2][0]{\makebox[#1mm][l]{#2}}
\newcommand{\xR}{\mathcal{R}}
\newcommand{\xRca}{\xR_\m{ca}}
\newcommand{\xRrc}{\xR_\m{rc}}
\newcommand{\xV}{\mathcal{V}}
\newcommand{\xF}{\mathcal{F}}
\newcommand{\xT}{\mathcal{T}}
\newcommand{\xJ}{\mathcal{J}}
\newcommand{\EC}{\mathcal{EC}}
\newcommand{\Val}{\xV\m{al}}
\newcommand{\Var}{\xV\m{ar}}
\newcommand{\LVar}{\mathcal{L}\Var}
\newcommand{\EVar}{\mathcal{E}\Var}
\newcommand{\TVar}{\mathcal{T}\Var}
\newcommand{\Dom}{\mathcal{D}\m{om}}
\newcommand{\Pos}{\mathcal{P}\m{os}}
\newcommand{\FPos}{\Pos_\xF}
\newcommand{\inter}[1]{[\![{#1}]\!]}
\newcommand{\seq}[2][n]{{#2_1},\dots,{#2_{#1}}}
\newcommand{\NN}{\mathbb{N}}
\newcommand{\ZZ}{\mathbb{Z}}
\newcommand{\mte}{\m{te}}
\newcommand{\mth}{\m{th}}
\newcommand{\xFTe}{\xF_{\m{te}}}
\newcommand{\xFTh}{\xF_{\m{th}}}
\newcommand{\SET}[1]{\{\h#1\h\}}
\newcommand{\set}[1][n]{\{\h1,\dots,#1\h\}}
\newcommand{\ov}[1]{\overline{#1}}
\newcommand{\CO}[1]{[\h#1\h]}
\newcommand{\R}{\rightarrow}
\newcommand{\Ra}[1][]{\R^{#1}}
\newcommand{\Rb}[1][]{\R_{#1}}
\newcommand{\Rab}[2][]{\R_{#1}^{#2}}
\newcommand{\RbR}{\Rb[\xR]}
\newcommand{\RboR}{\Rb[\smash{\ov{\xR}}]}
\newcommand{\xRa}[1][]{\xrightarrow{#1}}
\renewcommand{\L}{\leftarrow}
\newcommand{\La}[1][]{\mr{\vphantom{\R}^{#1}{\L}}}
\newcommand{\CP}{\m{CP}}
\newcommand{\PCP}{\m{PCP}}
\newcommand{\CCP}{\m{CCP}}
\newcommand{\CPCP}{\m{CPCP}}
\newcommand{\overlap}[3][p]{\langle #2, #1, #3 \rangle}
\newcommand{\crr}[3]{#1 \R #2~\CO{#3}}
\newcommand{\CRR}{\crr{\ell}{r}{\varphi}}
\newcommand{\ccp}[3]{#1 \approx #2~\CO{#3}}
\newcommand{\Rs}{\stackrel{\smash{\RB[-.5]{\tiny $\sim$~}}}{\R}}
\newcommand{\sRb}[1]{\Rs_{#1}}
\newcommand{\sRbR}{\sRb{\xR}}
\newcommand{\sRab}[2][]{\Rs_{#1}^{#2}}
\newcommand{\MS}[1][]{\xRa[\smash{%
 \RB[-2]{\MC[2.5]{$\stackrel{#1}{\circ}$}}}]}
\newcommand{\sMS}{\stackrel{\smash{\RB[-1.2]{\tiny $\sim$\:}}}{\MS}}
\newcommand{\sMSab}[2][]{\sMS_{#1}^{#2}}
\newcommand{\spto}{\stackrel{\smash{\RB[-1.0]{\tiny $\sim$\:}}}{\pto}}
\newcommand{\sptoab}[2][]{\mr{\smash{\spto}_{#1}^{#2}}}
\newcommand{\crest}{\textsf{crest}}
\title{Confluence of Logically Constrained Rewrite Systems Revisited%
\thanks{This research is funded by the Austrian Science Fund (FWF) project
I5943.}}
\titlerunning{Confluence of LCTRSs Revisited}
\author{Jonas Sch\"opf%
\textsuperscript{(\href{mailto:jonas.schoepf@uibk.ac.at}{\Letter})}%
\orcidID{0000-0001-5908-8519} \and
Fabian Mitterwallner\orcidID{0000-0001-5992-9517} \and
Aart Middeldorp\orcidID{0000-0001-7366-8464}}
\authorrunning{Sch\"opf, Mitterwallner, Middeldorp}
\institute{Department of Computer Science, University of Innsbruck,
Innsbruck, Austria
\email{\{jonas.schoepf,fabian.mitterwallner,aart.middeldorp\}@uibk.ac.at}}
\begin{document}

\maketitle

\begin{abstract}
We show that (local) confluence of terminating logically
constrained rewrite systems is undecidable, even when the underlying
theory is decidable.
Several confluence criteria for logically constrained rewrite systems
are known. These were obtained by replaying existing proofs for plain
term rewrite systems in a constrained setting, involving a non-trivial
effort. We present a simple transformation from logically constrained
rewrite systems to term rewrite systems such that critical pairs of the
latter correspond to constrained critical pairs of the former. The
usefulness of the transformation is illustrated by lifting the advanced
confluence results based on (almost) development closed critical pairs as
well as on parallel critical pairs to the constrained setting.
\end{abstract}

\section{Introduction}

Logically constrained rewrite systems (LCTRSs)~\cite{KN13} are a natural
extension of plain term rewrite systems (TRSs) with native support for
constraints that are handled by SMT solvers. The latter makes LCTRSs
suitable for program analysis~\cite{CL18,CLB23,FKN17,WM18}. 
In this paper we are concerned with confluence techniques for LCTRSs.
Numerous techniques exist to (dis)prove confluence of TRSs.
For LCTRSs much less is known. Kop and Nishida~\cite{KN13}
established
(weak) orthogonality as sufficient confluence criteria for LCTRSs.
Joinability of critical pairs for terminating systems is implicit in
\cite{WM18}. Very recently, strong closedness for linear LCTRSs
and (almost) parallel closedness for left-linear LCTRSs were
established \cite{SM23}. The proofs of these results were obtained by
\emph{replaying} existing proofs for TRSs in a constrained setting,
involving a non-trivial effort. For more advanced confluence criteria,
this is not feasible.

In particular, the conclusion in \cite{KN13} that LCTRSs ``are
\emph{flexible}: common analysis techniques for term rewriting extend to
LCTRSs without much effort'' is not accurate. On the contrary, in
\Cref{sec:undecidability} we show that (local) confluence of terminating
LCTRSs is undecidable, even for a decidable fragment of the theory of
integers.

In \Cref{sec:transformation} we present a simple transformation from
LCTRSs to TRSs which allows us to relate results for the latter to the
former. We use the transformation to extend two advanced confluence
criteria based on (parallel) critical pairs from TRSs to LCTRSs:
In \Cref{sec:dev-closed-ccps} we prove that
(almost) development closed left-linear LCTRSs are confluent by
\emph{reusing} the corresponding result for TRSs obtained by van
Oostrom~\cite{vO97} and in \Cref{sec:pcps} we lift the result of
Toyama~\cite{T81} based on parallel critical pairs from TRSs to LCTRSs.
Both results are employed in state-of-the-art confluence provers for
TRSs (\textsf{ACP}~\cite{AYT09}, \textsf{CSI}~\cite{NFM17},
\textsf{Hakusan}~\cite{SH22}) and have only recently been formally
verified in the Isabelle proof assistant~\cite{HKST24,KM23a,KM23b}.

For the LCTRS extension of the result of Toyama~\cite{T81} we
observed a subtle problem in the definition of the equivalence relation
on constrained terms, which goes back to \cite{KN13} and has been
used in subsequent work on LCTRSs~\cite{FKN17,SM23,WM18}. We briefly
discuss the issue at the end of the next section, after recalling
basic notions for LCTRSs.
The results in \Cref{sec:transformation} and
\Cref{sec:dev-closed-ccps} were first announced in~\cite{MSM23}.

\section{Preliminaries}
\label{sec:prelims}

We assume familiarity with the basic notions of term rewriting. In
this section we recall a few key notions
for LCTRSs. For more background information we refer
to~\cite{KN13,SM23,WM18}. We assume a many-sorted signature
$\xF = \xFTe \cup \xFTh$ with a \textsf{te}rm and \textsf{th}eory part.
For every sort $\iota$ in $\xFTh$ we have a
non-empty set $\Val_\iota \subseteq \xFTh$ of value
symbols, such that all $c \in \Val_\iota$ are constants of sort $\iota$.
We demand $\xFTe \cap \xFTh \subseteq \Val$ where
$\Val = \bigcup_\iota \Val_\iota$.
In the case of integers this results
in an infinite signature with $\mathbb{Z} \subseteq \Val \subseteq \xFTh$.
A term in $\xT(\xFTh,\xV)$ is called a \emph{logical} term.
Ground logical terms are mapped to values by an
interpretation $\xJ$:
$\inter{f(\seq t)} = f_\xJ(\inter{t_1},\dots,\inter{t_n})$.
We assume a bijection between value symbols and elements in the domain
of $\xJ$, e.g., for integers: $\inter{\m{0}} = 0$,
$\inter{\m{-1}} = -1$, $\inter{\m{1}} = 1$ and so on.
Logical terms of sort $\m{bool}$ are called \emph{constraints}.
A constraint $\varphi$ is \emph{valid} if
$\inter{\varphi\gamma} = \top$ for all substitutions $\gamma$ such that
$\gamma(x) \in \Val$ for all $x \in \Var(\varphi)$.
A \emph{constrained rewrite rule} is a triple $\CRR$ where
$\ell, r \in \xT(\xF,\xV)$ are terms of the same sort such that
$\m{root}(\ell) \in \xFTe \setminus \xFTh$ and $\varphi$ is a
constraint.
We denote the set $\Var(\varphi) \cup (\Var(r) \setminus \Var(\ell))$ of
\emph{logical} variables in $\CRR$ by $\LVar(\CRR)$.
A constrained rewrite rule is left-linear (right-linear) if non-logical
variables in the left-hand side (right-hand side) occur at most once. If
a rule is left-linear and right-linear then it is called linear.
An LCTRS is a set of constrained rewrite rules.

A substitution $\sigma$ is said to \emph{respect} a rule $\CRR$, denoted
by $\sigma \vDash \CRR$, if
$\Dom(\sigma) \subseteq \Var(\ell) \cup \Var(r) \cup \Var(\varphi)$,
$\sigma(x) \in \Val$ for all $x \in \LVar(\CRR)$, and $\inter{\varphi\sigma}
= \top$. Moreover, a constraint $\varphi$ is respected by $\sigma$,
denoted by $\sigma \vDash \varphi$, if $\sigma(x) \in \Val$ for all
$x \in \Var(\varphi)$ and $\inter{\varphi\sigma} = \top$. We call
$f(\seq{x}) \R y~\CO{y = f(\seq{x})}$ with a fresh variable $y$
and $f \in \xFTh \setminus \Val$ a \emph{calculation rule}.
Calculation rules are not part of the rules of an LCTRS $\xR$.
The set of
all calculation rules induced by the signature $\xFTh$ of an LCTRS
$\xR$ is denoted by $\xRca$ and we abbreviate $\xR \cup \xRca$ to $\xRrc$.
An LCTRS is called linear (left-linear, right-linear) if all its
rules in $\xR$ are linear (left-linear, right-linear).
A rewrite step $s \RbR t$ satisfies $s|_p = \ell\sigma$ and
$t = s[r\sigma]_p$
for some position $p$, constrained rewrite rule $\CRR$ in $\xRrc$, and
substitution $\sigma$ such that $\sigma \vDash \CRR$.
We drop the subscript $\xR$ from $\RbR$ when no confusion arises.
An LCTRS $\xR$ is confluent if there exists a term $v$ with
$t \Ra[*] v \La[*] u$ whenever $t \La[*] s \Ra[*] u$, for all terms $s$,
$t$ and $u$. For confluence analysis we need to rewrite constrained terms.

A \emph{constrained term} is a pair $s~\CO{\varphi}$ consisting
of a term $s$ and a constraint $\varphi$. Two constrained terms
$s~\CO{\varphi}$ and $t~\CO{\psi}$ are \emph{equivalent},
denoted by $s~\CO{\varphi} \sim t~\CO{\psi}$, if for every substitution
$\gamma \vDash \varphi$ with $\Dom(\gamma) = \Var(\varphi)$
there is some substitution $\delta \vDash \psi$ with
$\Dom(\delta) = \Var(\psi)$ such that $s\gamma = t\delta$, and vice versa.
Let $s~\CO{\varphi}$ be a constrained term.
If $s|_p = \ell\sigma$ for some constrained rewrite rule
$\rho\colon \crr{\ell}{r}{\psi} \in \xRrc$, position $p$, and
substitution $\sigma$ such that $\sigma(x) \in \Val \cup \Var(\varphi)$
for all $x \in \LVar(\rho)$, $\varphi$ is satisfiable and
$\varphi \Rightarrow \psi\sigma$ is valid then
$s~\CO{\varphi} \RbR s[r\sigma]_p~\CO{\varphi}$.
The rewrite relation $\sRbR$ on constrained terms is defined as
$\sim \cdot \RbR \cdot \sim$ and
$s~\CO{\varphi} \sRb{p} t~\CO{\psi}$
indicates that the rewrite step in $\sRbR$ takes place at position
$p$. Similarly, we write
$s~\CO{\varphi} \sRb{\geqslanT p} t~\CO{\psi}$
if the position in the rewrite step is below position $p$.
Note that in our definition of $\RbR$ the constraint is not
modified. This equals \cite[Definition~2.15]{FKN17}, but is different from
\cite{KN13,SM23} where calculation steps
$s[f(\seq{v})]_p~\CO{\varphi} \R s[v]_p~\CO{\varphi \land v = f(\seq{v})}$
modify the constraint.
However, the relation $\Rs$ can simulate the relation $\RbR$
from~\cite{KN13,SM23} as exemplified below.

\begin{example}
Consider the constrained term $x + \m{1}~\CO{x > \m{3}}$. Calculation
steps as defined in \cite{KN13,SM23} permit
$x + \m{1}~\CO{x > \m{3}} \R z~\CO{z = x + \m{1} \land x > \m{3}}$.
In our setting, an initial equivalence step is required to introduce the
fresh variable $z$ and the corresponding
assignment needed to perform a calculation:
$x + \m{1}~\CO{x > \m{3}} \sim x + \m{1}~\CO{z = x + \m{1} \land x > \m{3}}
\R z~\CO{z = x + \m{1} \land x > \m{3}}$.
\end{example}

Our treatment allows for a much simpler definition of parallel and
multi-step rewriting since we do not have to merge different constraints.

\subsection*{Equivalence on Constrained Terms}

The equivalence on constrained terms $\sim$ used in this paper
also differs from the equivalence relation used in~\cite{KN13,SM23},
which we will denote by $\sim'$.
In $\sim'$ the domain of substitutions is not restricted, i.e., 
$s~\CO{\varphi} \sim' t~\CO{\psi}$ if and only if for all substitutions
$\gamma \vDash \varphi$ there exists a substitution $\delta$ where
$\delta \vDash \psi$ and $s\gamma = t\delta$.
Intuitively, constrained terms are equivalent with respect to
$\sim'$ if their sets of ``allowed'' instances are equivalent, while for
$\sim$ we only instantiate variables appearing in the constraints and
therefore representing some value.
We have ${\sim} \subsetneq {\sim'}$. This can be seen as follows.
First of all, any substitution $\gamma$ with $\gamma \vDash \varphi$
can be split into $\gamma_1$ and $\gamma_2$ such that
$\gamma = \gamma_1 \cup \gamma_2 = {} \gamma_1\gamma_2$ with
$\Dom(\gamma_1) = \Var(\varphi)$ and $\gamma_1 \vDash \varphi$.
From $s~\CO{\varphi} \sim t~\CO{\psi}$ we obtain a substitution
$\delta_1$ where $\Dom(\delta_1) = \Var(\psi)$, $\delta_1 \vDash \psi$ and
$s\gamma_1 = t\delta_1$.
Hence also $s\gamma = s\gamma_1\gamma_2 = t\delta_1\gamma_2 = t\delta$
for $\delta = \delta_1\gamma_2$,
which implies $s~\CO{\varphi} \sim' t~\CO{\psi}$.
However, ${\sim'} \subseteq {\sim}$ does not hold since
$x~\CO{\m{true}} \sim' y~\CO{\m{true}}$ and
$x~\CO{\m{true}} \not\sim y~\CO{\m{true}}$.

The change is necessary, since we have to differentiate (non-logical)
variables in constrained terms from one another, to keep track of them
through rewrite sequences. Take the (LC)TRS $\xR$ consisting of the rule
$\m{f}(x,y) \to x$. When rewriting unconstrained terms we have
$\m{f}(x,y) \RbR x$ and $\m{f}(x,y) \not\RbR y$. When rewriting on
constrained terms with respect to $\sim'$, however, we have
$\m{f}(x,y)~\CO{\m{true}} \sim' \cdot \to \cdot \sim' x~\CO{\m{true}}$
and
$\m{f}(x,y)~\CO{\m{true}} \sim' \cdot \to \cdot \sim' y~\CO{\m{true}}$,
losing any information connecting the resulting variable to the initial
term. This is especially problematic in our analysis of parallel critical
pairs in \Cref{sec:pcps}, where keeping track of variables through
rewrite sequences is essential.
Note that $\m{f}(x,y)~\CO{\m{true}} \Rs x~\CO{\m{true}}$ but not
$\m{f}(x,y)~\CO{\m{true}} \Rs y~\CO{\m{true}}$.

\section{Undecidability}
\label{sec:undecidability}

Confluence is a decidable property of finite terminating TRSs,
a celebrated result of Knuth and Bendix~\cite{KB70} which forms the
basis of completion. For LCTRSs matters are more complicated.

\begin{theorem}
\label{thm:undecidable}
Local confluence is undecidable for terminating \textup{LCTRS}s.
\end{theorem}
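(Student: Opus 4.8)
The plan is to reduce from the halting problem for deterministic two-counter (Minsky) machines, which is undecidable, by constructing for each machine $M$ a \emph{terminating} LCTRS $\mathcal{R}_M$ over linear integer arithmetic (a decidable theory) that is locally confluent if and only if $M$ does not halt from its initial configuration. Since the map $M \mapsto \mathcal{R}_M$ is computable and every $\mathcal{R}_M$ is terminating, this already yields undecidability of local confluence on the class of terminating LCTRSs. The conceptual engine is the constrained critical pair lemma (the joinability-of-critical-pairs result implicit in~\cite{WM18}): local confluence is equivalent to joinability of all constrained critical pairs. I would arrange $\mathcal{R}_M$ so that it has a single non-trivial constrained critical pair whose joinability is a universal statement over all integer ``fuel'' values $v \ge 0$, and this $\Pi^0_1$ statement holds exactly when $M$ runs forever.

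Concretely, I would use term symbols $\mathsf{f}$, $\mathsf{sim}$, a symbol $\mathsf{step}_q$ for each state $q$ of $M$, and two distinct normal forms $\mathsf{a}, \mathsf{b}$. The critical overlap comes from the two root-overlapping rules $\mathsf{f}(x) \to \mathsf{a}\ \CO{x \ge 0}$ and $\mathsf{f}(x) \to \mathsf{sim}(x)\ \CO{x \ge 0}$, producing the constrained critical pair $\langle \mathsf{a}, \mathsf{sim}(x)\rangle\ \CO{x \ge 0}$. The rule $\mathsf{sim}(x) \to \mathsf{step}_{q_0}(0,0,x)$ starts a fuel-bounded simulation of $M$ from its initial configuration with fuel $x$; each instruction is encoded by a rule on $\mathsf{step}_q$ that consumes one unit of fuel, e.g. an increment becomes $\mathsf{step}_q(c_1,c_2,f) \to \mathsf{step}_{q'}(d,c_2,g)\ \CO{f > 0 \land d = c_1 + 1 \land g = f - 1}$ and a zero-test splits into the two mutually exclusive guards $c_1 = 0$ and $c_1 > 0$. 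Finally, $\mathsf{step}_{q}(c_1,c_2,0) \to \mathsf{a}$ for every non-halting state $q$ (fuel exhausted without halting) and $\mathsf{step}_{q_h}(c_1,c_2,f) \to \mathsf{b}$ for the halting state $q_h$. All constraints are linear, keeping the theory decidable, and the arithmetic is confined to the constraints so that our rules never overlap the calculation rules $\mathcal{R}_{\mathsf{ca}}$.

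Correctness rests on three observations. First, since $M$ is deterministic and the simulation performs exactly one machine step per fuel unit, $\mathsf{sim}(v)$ rewrites deterministically to $\mathsf{a}$ precisely when $M$ has not reached $q_h$ within $v$ steps, and to $\mathsf{b}$ otherwise; hence $\langle \mathsf{a}, \mathsf{sim}(x)\rangle\ \CO{x \ge 0}$ is joinable iff $\mathsf{sim}(v) \to^{*} \mathsf{a}$ for all $v \ge 0$, i.e. iff $M$ never halts. Second, every other constrained critical pair is trivial: the only overlaps among the $\mathsf{step}_q$ rules are between the two zero-test branches, or between a fuel-consuming rule ($f > 0$) and a finalizing rule ($f = 0$), and in each case the combined constraint is unsatisfiable. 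Third, $\mathcal{R}_M$ is terminating, as witnessed by a simple well-founded interpretation based on the remaining fuel, which every rule strictly decreases and which the $f > 0$ guards keep bounded below by $0$. Combining these, $\mathcal{R}_M$ is locally confluent iff $M$ does not halt, and undecidability follows.

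The main obstacle I anticipate is not any single step but the bookkeeping needed to guarantee that the construction behaves exactly as intended: verifying that the fuel-bounded encoding simulates $M$ faithfully and deterministically, that no spurious non-trivial critical pairs arise (in particular through interaction with $\mathcal{R}_{\mathsf{ca}}$ and the value/calculation machinery of LCTRSs), and that the termination measure is genuinely monotone and well-founded over the infinite value domain. A secondary point worth making explicit is that, because every $\mathcal{R}_M$ is terminating, Newman's lemma equates local confluence with confluence on this class, so the very same reduction simultaneously establishes undecidability of confluence for terminating LCTRSs. A reduction from Hilbert's tenth problem is also conceivable, but evaluating polynomials would force either nonlinear constraints or an extra multiplication gadget, so the fuel-bounded counter-machine simulation is the cleaner route for keeping the underlying theory decidable.
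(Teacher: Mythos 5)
Your proposal is correct and follows the same high-level architecture as the paper's proof --- a computable reduction producing a terminating LCTRS over linear integer arithmetic whose local confluence hinges on a single parametric constrained critical pair, joinability of whose instances is a $\Pi^0_1$-complete statement --- but the reduction itself is genuinely different. The paper reduces from PCP: candidate index strings are encoded bijectively as natural numbers, the rules for $\m{alpha}$ and $\m{beta}$ decode a number into the two concatenations, and $\m{test}$ compares them; the critical pair arises from the \emph{self-overlap} of $\m{start} \to \m{test}(\m{alpha}(n),\m{beta}(n),n)~\CO{n>0}$ (possible because of the extra variable $n$, via condition (5) of \Cref{def:ccp}), pairing two independent parameters $n \neq m$. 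As a consequence the paper must assume $\alpha_i \neq \beta_i$ for some $i$: otherwise every instance reduces to $\top$ and the pair would be joinable even though solutions exist. You instead reduce from the halting problem for deterministic two-counter machines via a fuel-bounded simulation, and your critical pair $\m{a} \approx \m{sim}(x)~\CO{x \geqslant 0}$ comes from two distinct root-overlapping rules, with the fixed normal form $\m{a}$ on one side; this avoids any non-triviality assumption and makes the equivalence ``joinable iff $\m{sim}(v) \to^* \m{a}$ for all $v$'' immediate, at the price of spelling out a machine encoding. Termination is analogous in both proofs: a strictly decreasing non-negative integer (your fuel, the paper's encoded string), handled by RPO or an interpretation. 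One bookkeeping slip in your second observation: the zero-test branches and the fuel/finalize conflicts are \emph{not} the only overlaps among the $\m{step}_q$ rules --- each fuel-consuming rule has fresh constrained variables ($d$, $g$) on its right-hand side, so $\Var(r) \nsubseteq \Var(\ell)$ and by condition (5) of \Cref{def:ccp} it overlaps with its own variant at the root, exactly as the paper's $\m{alpha}$ and $\m{beta}$ rules do. These additional critical pairs are harmless because the constraint determines $d$ and $g$ uniquely, so your conclusion stands, but they must be accounted for. With that noted, both reductions are sound, and your closing remark that Newman's lemma transfers the result to full confluence applies equally to the paper's construction.
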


\begin{proof}
We use a reduction from PCP~\cite{P46}. Let
$P = \SET{(\alpha_1,\beta_1),\dots,(\alpha_N,\beta_N)}$ with
$\seq[N]{\alpha}, \seq[N]{\beta} \in \SET{0,1}^+$ be an instance of PCP,
where we assume that $\alpha_i \neq \beta_i$ for at least one
$i \in \set[N]$. This entails no loss of generality, since instances that
violate this assumption are trivially solvable. We encode candidate strings
over $\set[N]$ as natural numbers
where the empty string $\epsilon$ is represented by $[\epsilon] = 0$,
and a non-empty string $i_0 i_1 \cdots i_k$ is represented by
$[i_0 i_1 \cdots i_k] = N \cdot [i_1 \cdots i_k] + i_0$.
So $[i_0 i_1 \cdots i_k] = i_0 + i_1 \cdot N +
\cdots + i_k \cdot N^k$. For instance, assuming $N = 3$, the number $102$
encodes the candidate string $\underline{3313}$
since $102 = 3 \cdot 33 + \underline{3}$,
$33 = 3 \cdot 10 + \underline{3}$,
$10 = 3 \cdot 3 + \underline{1}$ and $3 = 3 \cdot 0 + \underline{3}$.
Conversely, the candidate string $\underline{112}$ is mapped to
$22 = \underline{1} + \underline{1} \cdot 3^1 + \underline{2} \cdot 3^2$.
It is not difficult to see that this results in a bijection between
$\NN$ and candidate strings, for each $N > 0$.

The LCTRS $\xR_P$ that we construct is defined over the theory
$\m{Ints}$, with theory symbols
$\xFTh = \SET{>,+,\cdot,=,\land} \cup \Val$
and values $\Val = \mathbb{B} \cup \mathbb{Z}$, with the additional sorts
$\m{PCP}$ and $\m{String}$ and the following term signature:
\begin{align*}
\m{e} : {} &\m{String} &
\m{0}, \m{1} : {} &\m{String} \to \m{String} \\
\m{start}, \top, \bot : {} & \m{PCP} &
\m{test} :  {} &\m{String} \times \m{String} \to \m{PCP} \\
\m{alpha}, \m{beta} : {} &\m{Int} \to \m{String}
\end{align*}
The LCTRS $\xR_P$ consists of the following rules:
{\allowdisplaybreaks
\begin{align*}
\m{start} &\to \mathrlap{\m{test}(\m{alpha}(n),\m{beta}(n)) \quad
\CO{n > 0}} \\
\m{test}(\m{e},\m{e}) &\to \top \\
\m{test}(\m{0}(x),\m{0}(y)) &\to \m{test}(x,y) &
\m{test}(\m{0}(x),\m{1}(y)) &\to \bot \\
\m{test}(\m{1}(x),\m{1}(y)) &\to \m{test}(x,y) &
\m{test}(\m{1}(x),\m{0}(y)) &\to \bot \\
\m{test}(\m{0}(x),\m{e}) &\to \bot &
\m{test}(\m{e},\m{0}(y)) &\to \bot \\
\m{test}(\m{1}(x),\m{e}) &\to \bot &
\m{test}(\m{e},\m{1}(y)) &\to \bot \\
\m{alpha}(0) &\to \m{e} &
\m{beta}(0) &\to \m{e} \\
\intertext{and, for all $i \in \set[N]$,}
\m{alpha}(n) &\to \ML[20]{$\alpha_i(\m{alpha}(m))$} \quad
\mathrlap{\CO{N \cdot m + i = n \land n > 0}} \\
\m{beta}(n) &\to \ML[20]{$\beta_i(\m{beta}(m))$} \quad
\mathrlap{\CO{N \cdot m + i = n \land n > 0}}
\end{align*}}
Here, for a string $\gamma \in \SET{0,1}^*$ and a term $t : \m{String}$,
$\gamma(t) : \m{String}$ is defined as
\[
\gamma(t) = \begin{cases}
t &\text{if $\gamma = \epsilon$} \\
\m{0}(\gamma'(t)) &\text{if $\gamma = 0\gamma'$} \\
\m{1}(\gamma'(t)) &\text{if $\gamma = 1\gamma'$}
\end{cases}
\]
Note that in the constraints $n$ and $m$ are variables, while $N$ and $i$
are values. Hence all constraints are in the decidable fragment of linear
integer arithmetic and the rewrite relation $\to_{\xR_P}$ is computable.

We claim that $\xR_P$ is locally confluent if and only if $P$ has no
solution. The LCTRS $\xR_P$ admits the constrained critical pair
\[
\m{test}(\m{alpha}(n),\m{beta}(n)) \approx
\m{test}(\m{alpha}(m),\m{beta}(m)) \quad \CO{n > 0 \,\land\, m > 0}
\]
with $n \neq m$.
The rules with left-hand sides $\m{alpha}(n)$ and $\m{beta}(n)$ give rise
to further constrained critical pairs but these are harmless since for all
$n, N > 0$ there are unique numbers $i$ and $m$ satisfying the
constraint $\CO{N \cdot m + i = n \land n > 0}$.

By construction of the rules for $\m{test}$,
$\m{test}(\m{alpha}(n),\m{beta}(n)) \to^* \top$ if $n$ represents a
solution of $P$ and $\m{test}(\m{alpha}(n),\m{beta}(n)) \to^* \bot$ if
$n$ does not represent a solution of $P$. Since we assume that $P$ is
non-trivial, the latter happens for some $n > 0$. Hence all instances of
the constrained critical pairs can only be joined if
$\m{test}(\m{alpha}(n),\m{beta}(n)) \to^* \bot$ for all $n > 0$.
Hence $\xR_P$ is locally confluent if and only if $P$ has no solution.

The LCTRS $\xR_P$ is terminating by the recursive path order~\cite{KN13}
with the precedence $\m{start} > \m{test} > \m{alpha} > \m{beta} > \m{1} >
\m{0} > \m{e} > \m{\top} > \m{\bot}$ and the well-founded order
$\sqsupset_{\m{Int}}$ on integers where
$x \sqsupset_{\m{Int}} y$ if and only if $x > y$ and
$x \geqslant 0$. The key observation is that the constraint
$\CO{N \cdot m + i = n \land n > 0}$ in the recursive rules for
$\m{alpha}$ and $\m{beta}$ ensure $n > m$ since
$N > 0$ and $i \geqslant 1$.
\qed
\end{proof}

A key difference between TRSs and LCTRSs leading to this undecidability
result can be seen in the first rule:
$\m{start} \to \m{test}(\m{alpha}(n),\m{beta}(n))\:\CO{n > 0}$.
Plain TRSs usually do not allow variables appearing only in the
right-hand side of a rule, as is the case for $n$ here, because then
termination never holds.
However, in LCTRSs such variables are useful,
since they can be used to model computations on arbitrary values
which are often used to represent user input in program analysis.
For $\xR_P$ this leads to infinitely many possible steps starting from
the term $\m{start}$ and in turn to infinitely many critical pairs,
breaking decidability.

\section{Transformation}
\label{sec:transformation}

In this section we present a simple transformation from LCTRSs to
possibly infinite TRSs, which exactly corresponds to the intuition
behind LCTRSs. This allows us to lift results on TRSs more easily to
LCTRSs than previously possible.

\begin{definition}
\label{def:transformation}
Given an \textup{LCTRS} $\xR$, the \textup{TRS} $\ov{\xR}$ consists of the
following rules: $\ell\tau \R r\tau$
for all $\rho\colon \CRR \in \xRrc$ with $\tau \vDash \rho$ and
$\Dom(\tau) = \LVar(\rho)$.
\end{definition}

Note that $\ov{\xR}$ typically consists of infinitely many rules.

\begin{lemma}
\label{lem:overline relation identity}
The rewrite relations of $\xR$ and $\ov{\xR}$ are the same.
Moreover ${\Rb[p,\xR]} = {\Rb[p,\ov{\xR}]}$ for all positions $p$.
\end{lemma}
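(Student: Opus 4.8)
The plan is to prove the stronger, position-indexed identity ${\xRbRp} = {\xRboRp}$ for every position $p$; the first claim is then the special case obtained by projecting away $p$. I would prove the two inclusions separately, in each case reproducing a rewrite step of the opposite system that contracts the \emph{same} redex at the \emph{same} position. The structural backbone is the partition, for a rule $\rho\colon\CRR$, of its variables into the logical ones $\LVar(\rho)$ and the non-logical ones: inside $\ell$ the logical variables are exactly $\Var(\ell)\cap\Var(\varphi)$, while the non-logical variables of $r$ satisfy $\Var(r)\setminus\LVar(\rho)\subseteq\Var(\ell)\setminus\Var(\varphi)$. In particular every clause~(1) rule $\ell\tau\R r\tau$ of $\ov{\xR}$ has $\Var(r\tau)\subseteq\Var(\ell\tau)$ (the extra variables $\EVar(\rho)$ are logical and hence instantiated to values by $\tau$), so it is a genuine TRS rule.

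For the inclusion ${\xRbRp}\subseteq{\xRboRp}$ I would take a step $s\xRbRp t$ using a rule of $\xRrc$ and a respecting substitution $\sigma$. If the rule stems from $\xR$, put $\tau=\sigma|_{\LVar(\rho)}$; then $\Dom(\tau)=\LVar(\rho)$ and $\tau\vDash\rho$, so $\ell\tau\R r\tau$ is a rule of $\ov{\xR}$ by clause~(1), and $\sigma$ itself serves as the TRS matcher. A one-line check that $\sigma$ and $\tau$ coincide on $\LVar(\rho)$ (where $\tau$ already yields values, which $\sigma$ leaves untouched) gives $\ell\sigma=(\ell\tau)\sigma$ and $r\sigma=(r\tau)\sigma$, so the same redex at $p$ is contracted. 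If the rule is a calculation rule $f(\seq x)\R y~\CO{y=f(\seq x)}$, then $\sigma$ sends $\seq x$ to values $\seq v$ and $y$ to $\inter{f(\seq v)}$, and the clause~(2) rule $f(\seq v)\R\inter{f(\seq v)}$ reproduces the step.

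For the converse ${\xRboRp}\subseteq{\xRbRp}$ I would dualize this. A clause~(2) step is matched by the calculation rule under the substitution $\seq x\mapsto\seq v$, $y\mapsto\inter{f(\seq v)}$, which trivially respects that rule. For a clause~(1) step via $\ell\tau\R r\tau$ with TRS matcher $\theta$, I would glue $\tau$ and $\theta$ into one substitution $\sigma$ agreeing with $\tau$ on $\LVar(\rho)$ and with $\theta$ on $\Var(\ell)\setminus\Var(\varphi)$. The two sets are disjoint and their union is $\Var(\ell)\cup\Var(r)\cup\Var(\varphi)$, so $\Dom(\sigma)$ is exactly what $\vDash$ requires; since $\varphi$ mentions only logical variables, $\varphi\sigma=\varphi\tau$ is valid and every logical variable is sent to a value, whence $\sigma\vDash\CRR$. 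Finally $\ell\sigma=(\ell\tau)\theta=s|_p$ and $r\sigma=(r\tau)\theta$, so $s\xRbRp t$.

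The only real work, and the step I expect to be the main obstacle, is the variable bookkeeping needed to reconcile the two domain conventions: $\vDash$ fixes $\Dom(\sigma)=\Var(\ell)\cup\Var(r)\cup\Var(\varphi)$, whereas the transformation uses $\tau$ with $\Dom(\tau)=\LVar(\rho)$. Everything hinges on the set-theoretic identities $\LVar(\rho)\uplus(\Var(\ell)\setminus\Var(\varphi))=\Var(\ell)\cup\Var(r)\cup\Var(\varphi)$ and $\Var(r)\setminus\LVar(\rho)\subseteq\Var(\ell)\setminus\Var(\varphi)$, which make both gluing constructions well defined and inert on the (value-valued) logical part. Once these hold, the contracted redex and its position are literally unchanged in both directions, so the \emph{moreover} clause is immediate and the unindexed relational identity follows at once.
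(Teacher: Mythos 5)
Your proposal is correct and follows essentially the same route as the paper's proof: both directions are handled by splitting/gluing substitutions along the partition of variables into $\LVar(\rho)$ and the remaining (non-logical) variables of $\ell$, with calculation rules matched against the type~(2) rules, and both arguments hinge on values being ground constants so that the redex and its position are preserved exactly. The only cosmetic difference is that the paper writes the matcher as the non-logical part $\delta$ of $\sigma$ (via $\sigma = \tau\delta$), whereas you use $\sigma$ itself, respectively glue $\tau$ with the TRS matcher $\theta$; this changes nothing of substance.
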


\begin{proof}
We first show ${\Rb[p,\xR]} \subseteq {\Rb[p,\ov{\xR}]}$.
Assume $s \Rb[p,\xR] t$. We have
$s = s[\ell\sigma]_p \R s[r\sigma]_p = t$ for some
$\rho\colon \CRR \in \xRrc$ and $\sigma \vDash \rho$.
We split $\sigma$ into two substitutions
$\tau = \SET{x \mapsto \sigma(x) \mid x \in \LVar(\rho)}$ and
$\delta = \SET{x \mapsto \sigma(x) \mid x \in \Var(\ell) \setminus
\LVar(\rho)}$. From $\sigma \vDash \rho$ we infer $\tau \vDash \rho$
and thus $\tau(x) \in \Val$ for all $x \in \LVar(\rho)$. Hence
$\sigma = \tau \cup \delta = \tau\delta$. We have
$\ell\tau \R r \tau \in \ov{\xR}$.
Hence $s = s[\ell\tau\delta]_p \Rb[p,\ov{\xR}] s[r\tau\delta]_p = t$ as
desired. To show the reverse inclusion
${\Rb[p,\ov{\xR}]} \subseteq {\Rb[p,\xR]}$ we assume
$s \Rb[p,\ov{\xR}] t$. Otherwise
$s = s[\ell\mu\nu]_p \Rb[p,\ov{\xR}] s[r\mu\nu]_p$ for some rule
$\rho\colon \CRR \in \xR$ with $\mu \vDash \rho$. Let
$\sigma = \mu\nu$. Since $\mu(x) \in \Val$ for all
$x \in \LVar(\rho)$, we have $x\sigma = x\mu$ for all
$x \in \LVar(\rho)$. Hence $\sigma \vDash \rho$ and thus
$s = s[\ell\sigma]_p \Rb[p,\xR] s[r\sigma]_p = t$.
\qed
\end{proof}

Since $\RbR$ and $\RboR$ coincide, we drop the subscript in the sequel.
We write $\EVar(\CRR)$ for the set $\Var(r) \setminus (\Var(\ell) \cup
\Var(\varphi))$ of extra variables of a rule. In the computation of
constrained critical pairs these variables of the overlapping rules
would lose the property of being a logical variable without adding
trivial constraints. Given a constrained rewrite rule $\rho$, we write
$\EC_\rho$ for $\bigwedge \SET{x = x \mid x \in \EVar(\rho)}$.
The set of positions in a term $s$ is denoted by $\Pos(s)$.
We write $\epsilon$ for the root position and $\FPos(s)$ for
the set of positions of function symbols in $s$.

\begin{definition}
\label{def:ccp}
An \emph{overlap} of an \textup{LCTRS} $\xR$ is a triple
$\overlap{\rho_1}{\rho_2}$
with rules $\rho_1\colon \crr{\ell_1}{r_1}{\varphi_1}$ and
$\rho_2\colon \crr{\ell_2}{r_2}{\varphi_2}$, satisfying the following
conditions: (1) $\rho_1$ and $\rho_2$ are variable-disjoint variants of
rewrite rules in $\xRrc$, (2) $p \in \FPos(\ell_2)$, (3) $\ell_1$ and
$\ell_2|_p$ unify with mgu $\sigma$ such that
$\sigma(x) \in \Val \cup \xV$ for all
$x \in \LVar(\rho_1) \cup \LVar(\rho_2)$, (4)
$\varphi_1\sigma \land \varphi_2\sigma$ is satisfiable, and (5) if
$p = \epsilon$ then $\rho_1$ and $\rho_2$ are not variants, or
$\Var(r_1) \nsubseteq \Var(\ell_1)$. In this case we call
$\ell_2\sigma[r_1\sigma]_p \approx r_2\sigma~
\CO{\varphi_1\sigma \land \varphi_2\sigma \land \psi\sigma}$
a \emph{constrained critical pair} obtained from the overlap
$\overlap{\rho_1}{\rho_2}$. Here
$\psi = \EC_{\rho_1} \land \EC_{\rho_2}$.
The peak
$\ell_2\sigma[r_1\sigma]_p~\CO{\Phi} \L \ell_2\sigma~\CO{\Phi}
\Rb[\epsilon] r_2\sigma~\CO{\Phi}$
with $\Phi = (\varphi_1 \land \varphi_2 \land \psi)\sigma$,
from which the constrained critical pair originates, is called a
\emph{constrained critical peak}. The set of all constrained critical
pairs of $\xR$ is denoted by $\CCP(\xR)$.
A constrained critical pair $s \approx t~\CO{\varphi}$ is \emph{trivial}
if $s\sigma = t\sigma$ for every substitution $\sigma$ with
$\sigma \vDash \varphi$.
\end{definition}

A key ingredient of our approach is to relate critical pairs of the
transformed TRS to constrained critical pairs of the original LCTRS.

\begin{theorem}
\label{thm:CP}
For every critical pair $s \approx t$ of $\ov{\xR}$
there exists a constrained critical pair $\ccp{s'}{t'}{\varphi'}$ of
$\xR$ and a substitution $\gamma$ such that $s = s'\gamma$, $t = t'\gamma$
and $\gamma \vDash \varphi'$.
\end{theorem}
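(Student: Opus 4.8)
The plan is to unfold the definition of a critical pair of the (possibly infinite) TRS $\ov{\xR}$ and reconstruct the originating overlap at the level of $\xR$. By \Cref{def:transformation} and the remark following \Cref{lem:overline relation identity}, every rule of $\ov{\xR}$ has the shape $\ell\tau \R r\tau$ for some $\rho\colon \crr{\ell}{r}{\varphi} \in \xRrc$ with $\tau \vDash \rho$ and $\Dom(\tau) = \LVar(\rho)$. Hence a critical pair $s \approx t$ of $\ov{\xR}$ arises from two variable-disjoint variants $\ell_1\tau_1 \R r_1\tau_1$ and $\ell_2\tau_2 \R r_2\tau_2$ of such rules, coming from $\rho_1, \rho_2 \in \xRrc$ (this will supply condition~(1) of \Cref{def:ccp}), a position $p \in \FPos(\ell_2\tau_2)$, and an mgu $\mu$ of $\ell_1\tau_1$ and $(\ell_2\tau_2)|_p$, with $s = (\ell_2\tau_2)\mu[(r_1\tau_1)\mu]_p$ and $t = (r_2\tau_2)\mu$. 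The goal is to exhibit the overlap $\overlap{\rho_1}{\rho_2}$ together with its mgu $\sigma$ and a substitution $\gamma$ having the stated properties.

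First I would locate the overlap inside the uninstantiated $\ell_2$. Since $\tau_2$ replaces logical variables by values and leaves the remaining variables untouched, $\FPos(\ell_2\tau_2)$ is $\FPos(\ell_2)$ together with the positions of the logical variables of $\ell_2$. At such a logical-variable position $(\ell_2\tau_2)|_p$ is a value, which cannot unify with $\ell_1\tau_1$, because the root of $\ell_1$ lies in $\xFTe \setminus \xFTh$ (or $\ell_1 = f(\seq{x})$ with $f \in \xFTh \setminus \Val$ for a calculation rule), so $\ell_1\tau_1$ is neither a value nor a variable. Hence $p \in \FPos(\ell_2)$, which is condition~(2), and $(\ell_2\tau_2)|_p = (\ell_2|_p)\tau$ where $\tau = \tau_1 \cup \tau_2$ (a disjoint union, as $\rho_1, \rho_2$ are variable-disjoint). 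Thus $\tau\mu$ unifies $\ell_1$ and $\ell_2|_p$, so an mgu $\sigma$ exists and $\tau\mu = \sigma\gamma$ for some $\gamma$. I would then set $s' = \ell_2\sigma[r_1\sigma]_p$, $t' = r_2\sigma$, and $\varphi' = (\varphi_1 \land \varphi_2 \land \psi)\sigma$ with $\psi = \EC_{\rho_1} \land \EC_{\rho_2}$, exactly as prescribed by $\overlap{\rho_1}{\rho_2}$.

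The remaining work is to certify that this is a genuine overlap and that $\gamma$ does the job. Condition~(3) follows from the factorization: for $x \in \LVar(\rho_1) \cup \LVar(\rho_2)$ we have $x\sigma\gamma = x\tau\mu = \tau(x) \in \Val$, a ground constant, and a term whose $\gamma$-instance is a constant must itself be that constant or a variable, so $\sigma(x) \in \Val \cup \xV$. Using $p \in \FPos(\ell_2) \subseteq \FPos(\ell_2\sigma)$ and $\tau\mu = \sigma\gamma$ I would then compute $s = \ell_2\sigma\gamma[r_1\sigma\gamma]_p = s'\gamma$ and $t = r_2\sigma\gamma = t'\gamma$. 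Finally $\varphi'\gamma = \varphi_1\tau_1 \land \varphi_2\tau_2 \land \psi\tau\mu$: the first two conjuncts are ground and valid because $\tau_i \vDash \rho_i$, and $\psi\tau\mu$ is a conjunction of trivially valid equations; moreover every variable of $\varphi'$ is the $\sigma$-image of some logical variable and is therefore sent by $\gamma$ to a value. This yields both $\gamma \vDash \varphi'$ and the satisfiability required in condition~(4).

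I expect the main obstacle to be condition~(5), the exclusion of root overlaps of a rule with its own variant. When $p = \epsilon$ the TRS critical pair is, by convention, only counted when the two $\ov{\xR}$-rules $\ell_1\tau_1 \R r_1\tau_1$ and $\ell_2\tau_2 \R r_2\tau_2$ are not variants. The delicate point is to translate this into condition~(5): assuming $\rho_1, \rho_2$ are variants via a renaming $\pi$ and $\Var(r_1) \subseteq \Var(\ell_1)$, I must argue that unifiability of $\ell_1\tau_1$ and $\ell_2\tau_2$ at the root forces $\tau_1$ and $\tau_2 \circ \pi$ to agree on the logical variables occurring in $\ell_1$, whence --- since $\Var(r_1) \subseteq \Var(\ell_1)$ means there are no extra variables and the constraint is dropped in $\ov{\xR}$ --- the two $\ov{\xR}$-rules turn out to be variants after all, contradicting the convention. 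Extra variables are exactly what make this fail, and they are accounted for by the $\Var(r_1) \nsubseteq \Var(\ell_1)$ disjunct and the $\EC_\rho$ padding in $\psi$. Handling this edge case carefully, rather than the routine unification bookkeeping, is where the real content of the proof lies.
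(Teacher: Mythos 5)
Your proposal is correct and takes essentially the same route as the paper's own proof: you combine the two value substitutions into one substitution $\tau$, observe that $\tau$ composed with the TRS-level mgu unifies $\ell_1$ and $\ell_2|_p$, factor that unifier through an mgu $\sigma$ at the LCTRS level to obtain $\gamma$, and then verify conditions (1)--(5) of \Cref{def:ccp}, handling condition (5) by the same variant-renaming contradiction (forcing $\tau_1$ and $\tau_2 \circ \pi$ to agree on the logical variables of $\ell_1$, which is exactly the paper's commutation argument $\ell_1\tau\pi = \ell_1\pi\tau = \ell_2\tau$). The only deviations are notational and the positive, rather than by-contradiction, phrasing of condition (3).
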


\begin{proof}
Let $s \approx t$ be a critical pair of $\ov{\xR}$, originating from 
the critical peak $\ell_2\mu\sigma[r_1\nu\sigma]_p \L \ell_2\mu\sigma =
\ell_2\mu\sigma[\ell_1\nu\sigma]_p \R r_2\mu\sigma$ with variants
$\rho_1\colon \crr{\ell_1}{r_1}{\varphi_1}$ and
$\rho_2\colon \crr{\ell_2}{r_2}{\varphi_2}$
of rules in $\xRrc$ without shared variables. Let
$\psi_i = \EC_{\rho_i}$ for $i \in \SET{1,2}$.
Furthermore we have $\Dom(\nu) = \LVar(\rho_1)$,
$\Dom(\mu) = \LVar(\rho_2)$,
$\nu \vDash \varphi_1 \land \psi_1$, $\mu \vDash \varphi_2 \land \psi_2$,
$p \in \FPos(\ell_2\mu)$, and $\sigma$ is an mgu of
$\ell_2\mu|_p$ and $\ell_1\nu$. Moreover, if $p = \epsilon$ then
$\ell_1\nu \R r_1\nu$ and $\ell_2\mu \R r_2\mu$ are not variants.
Define $\tau = \nu \uplus \mu$. We have
$\Dom(\tau) = \LVar(\rho_1) \cup \LVar(\rho_2)$. Let
$\varphi = \varphi_1 \land \varphi_2 \land \psi_1 \land \psi_2$. Clearly,
$\ell_1\tau = \ell_1\nu$, $r_1\tau = r_1\nu$, $\ell_2\tau = \ell_2\mu$,
$r_2\tau = r_2\mu$ and $\tau \vDash \varphi$.
Hence the given peak can be written as $\ell_2\tau\sigma[r_1\tau\sigma]_p
\L \ell_2\tau\sigma = \ell_2\tau\sigma[\ell_1\tau\sigma]_p \R
r_2\tau\sigma$ and $\tau \vDash \varphi$.
Since $\ell_2|_p\tau\sigma = \ell_1\tau\sigma$ there exists an mgu
$\delta$ of $\ell_2|_p$ and $\ell_1$, and a substitution $\gamma$ such
that $\delta\gamma = \tau\sigma$. Let $s' = \ell_2\delta[r_1\delta]_p$ and
$t' = r_2\delta$. We claim that $\overlap{\rho_1}{\rho_2}$ is an
overlap of $\xR$, resulting in the constrained critical pair
$\ccp{s'}{t'}{\varphi\delta}$.
Condition (1) of \Cref{def:ccp} is trivially satisfied.
For condition (2) we need to show $p \in \FPos(\ell_2)$. This follows
from $p \in \FPos(\ell_2\mu)$, $\mu(x) \in \Val$ for every
$x \in \Dom(\mu)$, and $\m{root}(\ell_2\mu|_p) = \m{root}(\ell_1\nu) \in
\xF \setminus \Val$. For condition (3) it remains to show that
$\delta(x) \in \Val \cup \xV$ for all
$x \in \LVar(\rho_1) \cup \LVar(\rho_2)$.
Suppose to the contrary that $\m{root}(\delta(x)) \in \xF \setminus \Val$
for some $x \in \LVar(\rho_1) \union \LVar(\rho_2)$. Then
$\m{root}(\delta(x)) = \m{root}(\gamma(\delta(x))) =
\m{root}(\sigma(\tau(x))) \in \mathcal{F} \setminus \Val$,
which contradicts $\tau \vDash \varphi$.
Condition (4) follows from the identity $\delta\gamma = \tau\sigma$
together with $\tau \vDash \varphi$ which imply
$\delta\gamma \vDash \varphi$
and thus $\varphi\delta$ is satisfiable.
Hence also $\varphi_1\delta \land \varphi_2\delta$ is satisfiable.
It remains to show condition (5), so let $p = \epsilon$ and further assume
that $\rho_1$ and $\rho_2$ are variants. So there exists a variable
renaming $\pi$ such that $\rho_1\pi = \rho_2$. In particular,
$\ell_1\pi = \ell_2$ and $r_1\pi = r_2$.
Let $x \in \Var(\ell_1)$. If
$x \in \LVar(\rho_1) = \Dom(\nu)$ then $\tau(x) = \nu(x) \in \Val$.
Moreover, $\pi(x) \in \LVar(\rho_2) = \Dom(\mu)$ and thus
$\tau(\pi(x)) = \mu(\pi(x)) \in \Val$. Since $\ell_1\tau$ and
$\ell_2\tau$ are unifiable, $\pi(\tau(x)) = \tau(x) = \tau(\pi(x))$.
If $x \notin \LVar(\rho_1)$ then $\tau(x) = x$,
$\pi(x) \notin \LVar(\rho_2)$
and similarly $\tau(\pi(x)) = \pi(x) = \pi(\tau(x))$.
All in all, $\ell_1\tau\pi = \ell_1\pi\tau = \ell_2\tau$.
Now, if $\Var(r_1) \subseteq \Var(\ell_1)$ then we obtain
$r_1\tau\pi = r_1\pi\tau = r_2\tau$, contradicting the fact that
$\ell_1\nu \R r_1\nu$ and $\ell_2\mu \R r_2\mu$ are not variants.
We conclude that
$\ccp{s'}{t'}{\varphi\delta}$ is a constrained critical pair of $\xR$.
So we can take $\varphi' = \varphi\delta$.
Clearly, $s = s'\gamma$ and $t = t'\gamma$. Moreover,
$\gamma \vDash \varphi'$ since
$\varphi'\gamma = \varphi\tau\sigma = \varphi\tau$ and
$\tau \vDash \varphi$.
\qed
\end{proof}

The converse does not hold in general.

\begin{example}
\label{exa:weakly-orthogonal}
Consider the LCTRS $\xR$
consisting of the single rule $\crr{\m{a}}{x}{x = 0}$ where
the variable $x$ ranges over the integers. Since $x$ appears on the
right-hand side but not the left, we obtain a constrained critical pair
$\ccp{x}{x'}{x = 0 \land x' = 0}$. Since the constraint uniquely
determines the values of $x$ and $x'$, the TRS $\ov{\xR}$ consists
of the single rule $\m{a} \to 0$. Obviously $\ov{\xR}$ has no critical
pairs.
\end{example}

The above example also shows that orthogonality of $\ov{\xR}$ does
not imply orthogonality of $\xR$.
However, the counterexample relies somewhat on a technicality in
condition (5) of \Cref{def:ccp}. It only occurs when the two rules
$\crr{\ell_1}{r_1}{\varphi_1}$
and $\crr{\ell_2}{r_2}{\varphi_2}$
involved in the critical pair
overlap at the root and have instances
$\ell_1\tau_1 \R r_1\tau_1$ and $\ell_2\tau_2 \R r_2\tau_2$
in $\ov{\xR}$ which are variants of each other.
By dealing with such cases separately we can prove the following theorem.

\begin{theorem}
\label{thm:cp transformation}
For every constrained critical pair $\ccp{s}{t}{\varphi}$ of $\xR$ and
every substitution $\sigma$ with $\sigma \vDash \varphi$,
(1) $s\sigma = t\sigma$ or (2) there exist a critical pair
$u \approx v$ of $\ov{\xR}$ and a substitution $\delta$ such
that $s\sigma = u\delta$ and $t\sigma = v\delta$.
\end{theorem}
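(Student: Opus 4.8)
The plan is to reverse the construction in the proof of \Cref{thm:CP}: from the data defining the constrained critical pair together with the instantiating substitution $\sigma$, I would build two concrete rules of $\ov{\xR}$ whose overlap produces (in alternative (2)) the required critical pair. So suppose $\ccp{s}{t}{\varphi}$ originates from an overlap $\overlap{\rho_1}{\rho_2}$ with $\rho_i\colon \crr{\ell_i}{r_i}{\varphi_i}$ and mgu $\sigma_0$ of $\ell_1$ and $\ell_2|_p$, so that $s = \ell_2\sigma_0[r_1\sigma_0]_p$, $t = r_2\sigma_0$ and $\varphi = (\varphi_1 \land \varphi_2 \land \psi_1 \land \psi_2)\sigma_0$ with $\psi_i = \EC_{\rho_i}$. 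Writing $\theta = \sigma_0\sigma$, the instance of the critical peak becomes $s\sigma = \ell_2\theta[r_1\theta]_p$ and $t\sigma = r_2\theta$, and $\theta$ is a unifier of $\ell_1$ and $\ell_2|_p$.

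First I would produce the two rules of $\ov{\xR}$. Let $\tau_i$ be the restriction of $\theta$ to $\LVar(\rho_i)$. Using condition (3) of \Cref{def:ccp}, which gives $\sigma_0(x) \in \Val \cup \xV$ for $x \in \LVar(\rho_1) \cup \LVar(\rho_2)$, together with $\sigma \vDash \varphi$, I would check that $\tau_i(x) \in \Val$ for all $x \in \LVar(\rho_i)$: a value is fixed by $\sigma$, while if $\sigma_0(x)$ is a variable $y$ then $y \in \Var(\varphi)$, so $\sigma(y) \in \Val$. Since $\Var(\varphi_i \land \psi_i) = \LVar(\rho_i)$, validity of $\varphi\sigma$ entails validity of $\varphi_i\tau_i$. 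Hence $\tau_i \vDash \rho_i$ with $\Dom(\tau_i) = \LVar(\rho_i)$, so $g_i\colon \ell_i\tau_i \to r_i\tau_i$ is a rule of $\ov{\xR}$; as $\rho_1,\rho_2$ are variable-disjoint and $\tau_i$ introduces only values, $g_1$ and $g_2$ are variable-disjoint.

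Next I would exhibit the overlap of $g_1$ into $g_2$ at $p$ and split into the two alternatives. Let $\zeta$ be $\theta$ restricted to the variables outside $\LVar(\rho_1) \cup \LVar(\rho_2)$; then $\ell_i\tau_i\zeta = \ell_i\theta$ and $r_i\tau_i\zeta = r_i\theta$, because $\tau_i$ supplies the values of the logical variables and $\zeta$ the images of the rest. As $p \in \FPos(\ell_2)$ and $\tau_2$ replaces variables by constants, $p \in \FPos(\ell_2\tau_2)$, and $\ell_1\tau_1\zeta = \ell_1\theta = (\ell_2|_p)\theta = (\ell_2\tau_2)|_p\zeta$ shows that $\zeta$ unifies $\ell_1\tau_1$ and $(\ell_2\tau_2)|_p$; pick an mgu $\eta$ with $\zeta = \eta\kappa$. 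If it is \emph{not} the case that $p = \epsilon$ and $g_1, g_2$ are variants, then $\overlap{g_1}{g_2}$ is a genuine overlap of $\ov{\xR}$, yielding the critical pair $u \approx v$ with $u = \ell_2\tau_2\eta[r_1\tau_1\eta]_p$ and $v = r_2\tau_2\eta$; then $u\kappa = \ell_2\theta[r_1\theta]_p = s\sigma$ and $v\kappa = r_2\theta = t\sigma$, which is alternative (2) with $\delta := \kappa$. Otherwise $p = \epsilon$ and $g_1, g_2$ are variants via a renaming $\pi$, the overlap is excluded, and I would prove alternative (1), namely $s\sigma = t\sigma$. From $\ell_1\tau_1\pi = \ell_2\tau_2$ and the fact that $\zeta$ unifies $\ell_1\tau_1$ with $\ell_2\tau_2 = \ell_1\tau_1\pi$, I obtain $\zeta(x) = \zeta(\pi(x))$ for every $x \in \Var(\ell_1\tau_1)$. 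Combined with $\Var(r_1\tau_1) \subseteq \Var(\ell_1\tau_1)$ and $r_1\tau_1\pi = r_2\tau_2$, this gives $r_1\tau_1\zeta = r_1\tau_1\pi\zeta = r_2\tau_2\zeta$, i.e.\ $s\sigma = r_1\theta = r_2\theta = t\sigma$ (here $p = \epsilon$ makes $s = r_1\sigma_0$).

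I expect the main obstacle to be the excluded root-overlap case. One has to recognise that a root overlap of two rules of $\ov{\xR}$ that are variants of each other contributes no critical pair, and then recover $s\sigma = t\sigma$ from the variant renaming. This is precisely the phenomenon behind condition (5) of \Cref{def:ccp} and \Cref{exa:weakly-orthogonal}: even when $\Var(r_1) \nsubseteq \Var(\ell_1)$, the offending variables of $r_1$ are logical and hence turned into values by $\tau_1$, so that $\Var(r_1\tau_1) \subseteq \Var(\ell_1\tau_1)$ still holds and the renaming argument goes through. A secondary but delicate point is the verification that $\tau_1$ and $\tau_2$ genuinely yield rules of $\ov{\xR}$, for which both the identity $\Var(\varphi_i \land \psi_i) = \LVar(\rho_i)$ and the hypothesis $\sigma \vDash \varphi$ are essential.
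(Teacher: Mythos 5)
Your proposal is correct and follows essentially the same route as the paper's proof: splitting $\sigma_0\sigma$ into the value parts $\tau_1,\tau_2$ on $\LVar(\rho_1),\LVar(\rho_2)$ (yielding rules of $\ov{\xR}$) and a remainder $\zeta$, factoring $\zeta$ through an mgu of the instantiated left-hand sides, and case-splitting on whether $p = \epsilon$ with the instantiated rules being variants. The only difference is presentational: you spell out the renaming argument showing $r_1\tau_1\zeta = r_2\tau_2\zeta$ in the variant case (using $\Var(r_1\tau_1) \subseteq \Var(\ell_1\tau_1)$), which the paper asserts more tersely.
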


\begin{proof}
Let $\ccp{s}{t}{\varphi}$ be a constrained critical pair of $\xR$
originating from the critical peak
$s = \ell_2 \theta[r_1\theta]_p \L \ell_2\theta[\ell_1\theta]_p
\R r_2\theta = t$ with variants
$\rho_1\colon \crr{\ell_1}{r_1}{\varphi_1}$ and
$\rho_2\colon \crr{\ell_2}{r_2}{\varphi_2}$ of rules in $\xRrc$,
and an mgu $\theta$ of $\ell_2|_p$ and $\ell_1$ where
$p \in \FPos(\ell_2)$.
Moreover $\theta(x) \in \Val \cup \xV$ for all
$x \in \LVar(\rho_1) \cup \LVar(\rho_2)$, and
$\varphi = \varphi_1\theta \land \varphi_2\theta \land
\psi \theta$ with
$\psi = \EC_{\rho_1} \land \EC_{\rho_2}$.
Let $\sigma$ be a substitution with $\sigma \vDash \varphi$. Hence
$\theta\sigma \vDash \varphi_1 \land \varphi_2 \land \psi$
and further $\sigma(\theta(x)) \in \Val$ for all
$x \in \LVar(\rho_1) \cup \LVar(\rho_2)$.
We split $\theta\sigma$ into substitutions $\tau_1$,
$\tau_2$ and $\pi$ as follows:
$\tau_i(x) = x\theta\sigma$ if $x \in \LVar(\rho_i)$
and $\tau_i(x) = x$ otherwise,
for $i \in \SET{1,2}$, and
$\pi(x) = x\theta\sigma$ if $x \in \Dom(\theta\sigma) \setminus
(\LVar(\rho_1) \cup \LVar(\rho_2))$ and
$\pi(x) = x$ otherwise.
From $\theta\sigma \vDash \varphi_1 \land \varphi_2 \land \psi$
and $\Var(\varphi_i) \subseteq \LVar(\rho_i)$
we infer $\tau_i \vDash \varphi_i$ for $i \in \SET{1,2}$.
Since $\Dom(\tau_i) = \LVar(\rho_i)$,
$\ell_i\tau_i \R r_i\tau_i \in \ov{\xR}$ for $i \in \SET{1,2}$.
Furthermore, $\tau_i\pi = \tau_i \cup \pi$ for $i \in \SET{1,2}$.
Hence $\ell_2|_p\tau_2\pi = \ell_2|_p\theta\sigma = \ell_1\theta\sigma =
\ell_1\tau_1\pi$,
implying that $\ell_2|_p\tau_2$ and $\ell_1\tau_1$ are unifiable.
Let $\gamma$ be an mgu of these two terms.
There exists a substitution $\delta$ such that $\gamma\delta = \pi$.
Clearly $p \in \FPos(\ell_2\tau_2)$.
If $p \neq \epsilon$ or $\ell_1\tau_1 \R r_1\tau_1$ and
$\ell_2\tau_2 \R r_2\tau_2$ are not variants, then
$u \approx v$ with
$u = \ell_2\tau_2\gamma[r_1\tau_1\gamma]_p$ and
$v = r_2\tau_2\gamma$ is a critical pair of $\ov{\xR}$. Moreover
$t\sigma = r_2\theta\sigma = r_2\tau_2\pi = r_2\tau_2\gamma\delta =
v\delta$, and similarly $s\sigma = u\delta$. Thus option (2) is satisfied.
If $p = \epsilon$ and $\ell_1\tau_1 \R r_1\tau_1$ and
$\ell_2\tau_2 \R r_2\tau_2$ are variants then
$s\sigma = r_1\tau_1\gamma\delta = r_2\tau_2\gamma\delta = t\sigma$,
fulfilling (1).
\qed
\end{proof}

A TRS (LCTRS) is weakly orthogonal if it is left-linear and
all its (constrained) critical pairs are trivial. Since $\ov{\xR}$ is
left-linear if and only if $\xR$ is left-linear, a
direct consequence of \Cref{thm:cp transformation} is that weak
orthogonality of $\ov{\xR}$ implies weak orthogonality of $\xR$.

Our transformation is not only useful for confluence analysis.

\begin{example}
For the LCTRS $\xR_P$ in the proof of \Cref{thm:undecidable} the
TRS $\ov{\xR}_P$ consists of all unconstrained rules of $\xR_P$
together with
$f(\seq{v}) \to \inter{f(\seq{v})}$
for all $f \in \xFTh \setminus \Val$ and $\seq{v} \in \Val$,
$\m{start} \to \m{test}(\m{alpha}(n),\m{beta}(n))$
for all $n > 0$,
$\m{alpha}(n) \to \alpha_i(\m{alpha}(m))$ and
$\m{beta}(n) \to \beta_i(\m{beta}(m))$
for all $i \in \set[N]$, $n > 0$ and
$m \geqslant 0$ such that $N \cdot m + i = n$.
Termination of the infinite TRS $\ov{\xR}_P$ is easily shown by 
LPO or dependency pairs.
\end{example}

\section{Development Closed Critical Pairs}
\label{sec:dev-closed-ccps}

Using \Cref{thm:CP} we can easily transfer confluence criteria for TRSs to
LCTRSs. Rather than reproving the confluence results reported in
\cite{KN13,WM18,SM23}, in this section we illustrate this by
extending the result of van
Oostrom~\cite{vO97} concerning (almost) development closed critical pairs
from TRSs to LCTRSs. This result subsumes most critical-pair based
confluence criteria, as can be seen in \Cref{fig:venn-diagram} in the
concluding section.

\begin{definition}
\label{def:multi-step rewriting}
Let $\xR$ be an \textup{LCTRS}. The multi-step relation $\MS$ on terms is
defined
inductively as follows: (1) $x \MS x$ for all variables $x$,
(2) $f(\seq{s}) \MS f(\seq{t})$ if $s_i \MS t_i$ with
$1 \leqslant i \leqslant n$,
(3) $\ell\sigma \MS r\tau$ if $\crr{\ell}{r}{\varphi} \in \xRrc$,
$\sigma \vDash \crr{\ell}{r}{\varphi}$ and $\sigma \MS \tau$,
where $\sigma \mto \tau$ denotes $\sigma(x) \MS \tau(x)$ for all variables
$x \in \Dom(\sigma)$.
\end{definition}

\begin{definition}
\label{def:(A)DC}
A critical pair $s \approx t$ is \emph{development closed} if $s \MS t$.
It is \emph{almost development closed} if it is not an overlay
and development closed, or it is an overlay and
$s \MS \cdot {} \La[*] t$.
A \textup{TRS} is called (almost) development closed if all its
critical pairs are (almost) development closed.
\end{definition}

The following result from \cite{vO97} has recently been formalized in
Isabelle~\cite{KM23a,KM23b}.

\begin{theorem}
\label{thm:V}
Left-linear almost development closed \textup{TRS}s are confluent.
\qed
\end{theorem}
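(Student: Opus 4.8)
The plan is to reduce confluence of $\R$ to a closure property of the multi-step relation $\MS$. Since $\R \subseteq \MS \subseteq \Ra[*]$, the relations $\MS$ and $\R$ have the same transitive closure, so it suffices to establish confluence of $\MS$. For orthogonal systems $\MS$ enjoys the diamond property outright; the point of the theorem is to recover confluence of $\MS$ from the weaker information supplied by the (almost) development closed critical pairs. I would therefore aim to prove a local commutation statement for $\MS$ that is strong enough to yield confluence, rather than the plain diamond property, which fails once overlays are present.

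First I would fix a local peak given by two multi-steps out of a common term, say $s \MS t_1$ and $s \MS t_2$, and record each multi-step by the set of redex occurrences it contracts. Residual theory then lets me classify the pairwise interaction of a redex from the first set with a redex from the second: their positions are either parallel, or one lies in the substitution part of the other, or the two left-hand sides genuinely overlap. Left-linearity is what makes the first two cases routine --- because each variable occurs at most once in a left-hand side, a redex contracted inside a substitution is simply copied (or erased) along the outer step, and the surviving residuals can be contracted by a single further multi-step on each side. This is precisely the reason the correct notion is \emph{development} closed rather than merely parallel closed: completing nested overlaps requires contracting residuals at non-parallel positions, which only $\MS$ (and not $\pto$) can do in one step.

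The genuinely new interactions are the overlapping pairs, and each of them is an instance of a critical pair. Here I would invoke the hypothesis: a non-overlay critical pair is development closed, giving a direct symmetric closing $s \MS t$, while an overlay critical pair yields only the asymmetric diagram $s \MS \cdot {} \La[*] t$. Assembling these local closings into a global confluence proof is the crux. For the non-overlay parts, as for the parallel and nested contributions, the closings are symmetric and tile together into diamonds without difficulty; the obstacle is the overlay case, where the trailing $\La[*]$ segment is not balanced by a corresponding forward segment, so that naive diamond tiling can loop. I would control this with a well-founded measure --- for instance a labelling of each multi-step by the term it rewrites together with the recursive structure of the contracted redex set --- and discharge the peak by a decreasing-diagram argument, arranging the induction so that the $\La[*]$ tail introduced by an overlay is always closed against strictly smaller data. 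Verifying that this measure genuinely decreases across the overlay closings, and that it is compatible with the residual bookkeeping governing the parallel and nested cases, is the main technical effort of the proof.
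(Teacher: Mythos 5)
The paper does not prove this theorem at all: it is van Oostrom's result, quoted from~\cite{vO97} with a bare qed symbol, and the paper's actual work lies in transferring it to LCTRSs via the transformation $\ov{\xR}$. So your attempt has to be measured against van Oostrom's own proof, and in outline it does track that proof: reduce confluence of $\R$ to a closure property of $\MS$ using ${\R} \subseteq {\MS} \subseteq {\Ra[*]}$, analyse a peak of two multi-steps by residual theory, dispose of variable/nesting interactions by left-linearity, close genuinely overlapping redex pairs by the critical-pair hypothesis, and aim for an asymmetric strong-confluence-type property of $\MS$ (from $s \MS t$ and $s \MS u$ conclude $t \MS v$ and $u \Ra[*] v$ for some $v$) rather than a diamond, which is indeed the right target.

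However, there is a genuine gap, and it sits exactly where you defer ``the main technical effort.'' First, the well-founded measure you propose cannot work: you suggest labelling each multi-step ``by the term it rewrites together with the recursive structure of the contracted redex set,'' but the TRS is not assumed terminating, so there is no well-founded order on the rewritten terms that decreases along the diagram; a decreasing-diagrams argument keyed to source terms has no well-founded label order to stand on. Van Oostrom's induction is instead on the \emph{amount of overlap between the two multi-steps of the peak} (roughly, the multiset of overlapping redex patterns of the two contracted redex sets), a quantity independent of termination: one singles out an overlapping pair of redexes, recognizes it as a critical-pair instance, replaces it using the closure hypothesis, and verifies that the total overlap of the resulting peak is strictly smaller. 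Second, you misplace the difficulty by claiming the non-overlay closings ``tile together into diamonds without difficulty.'' Even in the purely development closed case, after invoking the closing multi-step for one overlapping pair, that step must be recombined with the residuals of all other redexes contracted in the original peak, and nothing prevents it from overlapping those anew; controlling this recombination is precisely what the overlap induction is for, overlay or not. The overlay tail $\La[*]$ is then absorbed into that same induction rather than handled by a separate tiling layer. As it stands, your proposal is an accurate table of contents for van Oostrom's proof, but the theorem's content --- the measure and the recombination argument --- is missing.
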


We define multi-step rewriting on constrained terms.

\begin{definition}
\label{def:multi-step rewriting constraints}
Let $\xR$ be an \textup{LCTRS}. The multi-step relation $\MS$
on constrained terms is defined inductively as follows:
\begin{enumerate}
\item
$x~\CO{\varphi} \MS x~\CO{\varphi}$ for all variables $x$,
\item
$f(\seq{s})~\CO{\varphi} \MS f(\seq{t})~\CO{\varphi}$ if
$s_i~\CO{\varphi} \MS t_i~\CO{\varphi}$ for
$1 \leqslant i \leqslant n$,
\item
$\ell\sigma~\CO{\varphi} \MS r\tau~\CO{\varphi}$ if
$\rho\colon \crr{\ell}{r}{\psi} \in \xRrc$,
$\sigma(x) \in \Val \cup \Var(\varphi)$ for all $x \in \LVar(\rho)$,
$\varphi$ is satisfiable, $\varphi \Rightarrow \psi\sigma$ is valid, and
$\sigma~\CO{\varphi} \MS \tau~\CO{\varphi}$.
\end{enumerate}
Here $\sigma~\CO{\varphi} \MS \tau~\CO{\varphi}$ denotes
$\sigma(x)~\CO{\varphi} \MS \tau(x)~\CO{\varphi}$ for all variables
$x \in \Dom(\sigma)$. The relation $\sMS$ on constrained terms is defined
as $\sim \cdot \MS \cdot \sim$.
\end{definition}

\begin{example}
Consider the following LCTRS $\xR$ over the theory \textsf{Ints} with the
rules:
\begin{align*}
\m{max}(x,y) &\R x~\CO{x \geqslant y} &
\m{max}(x,y) &\R y~\CO{y \geqslant x}
\end{align*}
Rewriting the term $\m{max}(\m{1} + \m{2}, \m{3} + \m{2})$ to its normal
form $\m{5}$ requires three single steps. These steps can be combined into
a single multi-step $\m{max}(\m{1} + \m{2}, \m{3} + \m{2}) \mto \m{5}$.

The constrained term
$\m{max}(\m{1} + x,\m{3} + y)~\CO{x > \m{3} \land y = \m{1}}$
rewrites in a single multi-step
to its normal form $z~\CO{z = \m{1} + x \land x > \m{3}}$. This
involves the following parts
of~\Cref{def:multi-step rewriting constraints}.
Let $\varphi$ be
$x > \m{3} \land y = \m{1} \land z = \m{1} + x \land z' = \m{3} + y$.
Case (3) gives $\m{1} + x~\CO{\varphi} \MS z~\CO{\varphi}$ and
$\m{3} + y~\CO{\varphi} \MS z'~\CO{\varphi}$.
Using this we obtain
$\m{max}(\m{1} + x,\m{3} + y)~\CO{\varphi} \MS \m{max}(z,z')~\CO{\varphi}$
by case (2). A final application of case (3) yields
$\m{max}(z,z')~\CO{\varphi} \MS z~\CO{\varphi}$. Together with the
equivalences
\begin{align*}
\m{max}(\m{1} + x,\m{3} + y)~\CO{x > \m{3} \land y = \m{1}} &\sim
\m{max}(\m{1} + x,\m{3} + y)~\CO{\varphi} \\
z~\CO{\varphi} &\sim z~\CO{z = 1 + x \land x > 3}
\end{align*} 
we obtain $\m{max}(\m{1} + x,\m{3} + y)~\CO{x > \m{3} \land y = \m{1}}
\sMS z~\CO{z = 1 + x \land x > 3}$.
\end{example}

\Cref{def:(A)DC} is extended to LCTRSs as follows.

\begin{definition}
\label{def:almost development closed}
A constrained critical pair $s \approx t~\CO{\varphi}$ is
\emph{development closed} if
$s \approx t~\CO{\varphi} \sMSab[\geqslant 1]{} u \approx v~\CO{\psi}$
for some trivial $u \approx v~\CO{\psi}$.
A constrained critical pair is \emph{almost development closed} if it is
not an overlay
and development closed, or it is an overlay and
$s \approx t~\CO{\varphi} \sMSab[\geqslant 1]{} \cdot \sRab[\geqslant 2]{*}
u \approx v~\CO{\psi}$ for some trivial $u \approx v~\CO{\psi}$.
An \textup{LCTRS} is called (almost) development closed if all its
constrained critical pairs are (almost) development closed.
\end{definition}

Similar to~\cite{WM18,SM23}, the symbol $\approx$ is treated as a fresh binary
function symbol, resulting in constrained equations whose positions are
addressed in the usual way. Therefore positions below 1 in
$s \approx t~\CO{\varphi}$ refer to subterms of $s$.

\Cref{fig:intuition-transformation} conveys the idea how the
main result (\Cref{thm:almost development closedness})
in this section is obtained. For every critical pair in the
transformed TRS $\ov{\xR}$ there exists a
corresponding constrained critical pair in the original LCTRS $\xR$
(\Cref{thm:CP}). Almost development closure of the constrained critical
pair implies almost development closure of the critical pair
(\Cref{lem:almost development closed}).
Since the rewrite relations of $\xR$ and $\ov{\xR}$
coincide (\Cref{lem:overline relation identity}), we obtain
the confluence of almost development closed left-linear LCTRSs from
the corresponding result in~\cite{vO97}.

\begin{figure}
\centering
\begin{tikzpicture}
  \node (cp) at (-1.6,0) {$s \approx t \in \CP(\ov{\xR})$};
  \node (ccp) at (5,0) {$s' \approx t'~\CO{\varphi} \in \CCP(\xR)$};
  \draw [-implies,double equal sign distance] (0,0) to node [above]
    {\Cref{thm:CP}} (3,0);
  \draw [implies-,double equal sign distance] (0,-1) to node [below]
    {\Cref{lem:almost development closed}} (3,-1);
  \node (bccp) at (3.9,-1) {$\bullet$};
  \draw [->,bend right] (3.6,-0.2) to node [midway] {$\circ$} (3.8,-0.9);
  \draw [->,bend left] (4.2,-0.2) to node [right] {$*$} (4.0,-0.9);
  \node (bcp) at (-2.3,-1) {$\bullet$};
  \draw [->,bend left] (-2.0,-0.2) to node [right] {$*$} (-2.2,-0.9);
  \draw [->,bend right] (-2.6,-0.2) to node [midway] {$\circ$} (-2.4,-0.9);
  \end{tikzpicture}
\caption{Proof idea for \Cref{thm:almost development closedness}.}
\label{fig:intuition-transformation}
\end{figure}

We now present a few technical results that relate rewrite sequences
and multi-steps on (constrained) terms. These prepare for the use of
\Cref{thm:CP} to obtain the confluence of (almost) development closed
LCTRSs. The following lemma is from~\cite{WM18}.

\begin{lemma}
\label{lem:constrained-step-unconstrained}
If $s~\CO{\varphi} \Rb[p] t~\CO{\varphi}$ for a position $p$ and
substitution $\sigma$ with $\sigma \vDash \varphi$ then
$s\sigma \Rb[p] t\sigma$.
\qed
\end{lemma}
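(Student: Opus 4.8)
The plan is to unfold the definition of the constrained rewrite step and then exhibit an ordinary rewrite step whose witnessing substitution is obtained by composing with $\sigma$. Concretely, $s~\CO{\varphi} \Rb[p] t~\CO{\varphi}$ means there are a rule $\rho\colon \crr{\ell}{r}{\psi} \in \xRrc$ and a substitution $\tau$ with $s|_p = \ell\tau$, $t = s[r\tau]_p$, $\tau(x) \in \Val \cup \Var(\varphi)$ for all $x \in \LVar(\rho)$, $\varphi$ satisfiable, and $\varphi \Rightarrow \psi\tau$ valid. I would take $\mu = \tau\sigma$ as the witnessing substitution for the claimed unconstrained step $s\sigma \Rb[p] t\sigma$. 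Since the ordinary relation $\Rb[p]$ also ranges over rules of $\xRrc$, the argument is uniform and calculation rules need no separate treatment.

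First I would check that matching and replacement commute with applying $\sigma$: as $p$ is a position of $s$, we have $(s\sigma)|_p = (s|_p)\sigma = \ell\tau\sigma = \ell\mu$ and $t\sigma = (s[r\tau]_p)\sigma = (s\sigma)[r\tau\sigma]_p = (s\sigma)[r\mu]_p$. These are routine and rely only on the fact that substitution commutes with taking and replacing a subterm at a fixed position. The heart of the argument is to verify $\mu \vDash \rho$. For the value condition, fix $x \in \LVar(\rho)$: if $\tau(x) \in \Val$ then $\mu(x) = \tau(x)\sigma = \tau(x) \in \Val$, because values are constants and hence invariant under substitution; if $\tau(x) = y \in \Var(\varphi)$ then $\mu(x) = \sigma(y) \in \Val$, using $\sigma \vDash \varphi$. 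For validity of $\psi\mu$ I would first note $\Var(\psi) \subseteq \LVar(\rho)$, so that $\Var(\psi\tau) \subseteq \Var(\varphi)$ and $\psi\tau\sigma$ is ground; then, since $\varphi \Rightarrow \psi\tau$ is valid, instantiating with the value assignment $\sigma$ on $\Var(\varphi)$ gives $\inter{(\varphi \Rightarrow \psi\tau)\sigma} = \top$, and combining with $\inter{\varphi\sigma} = \top$ (from $\sigma \vDash \varphi$) yields $\inter{\psi\tau\sigma} = \top$, i.e.\ $\psi\mu = \psi\tau\sigma$ is valid. Restricting $\mu$ to $\Var(\ell) \cup \Var(r) \cup \Var(\psi)$ fixes the domain requirement, so $\mu \vDash \rho$ and the desired step $s\sigma \Rb[p] t\sigma$ follows.

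The step I expect to be the main obstacle is the constraint propagation: transferring validity of the implication $\varphi \Rightarrow \psi\tau$ through the instantiation $\sigma$ to validity of $\psi\tau\sigma$. This requires the semantic definition of ``valid'' (universal quantification over value substitutions) together with the observation $\Var(\psi\tau) \subseteq \Var(\varphi)$, which ensures that $\sigma$, assigning values on all of $\Var(\varphi)$ and validating $\varphi$, closes $\psi\tau$ and forces it to evaluate to $\top$. The domain bookkeeping for $\mu \vDash \rho$ and the fact that $\varphi$ is satisfiable (witnessed by the given $\sigma$) are minor points I would dispatch directly.
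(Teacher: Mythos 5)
Your proof is correct, but note that there is no in-paper argument to compare it against: the paper imports this lemma from~\cite{WM18} and states it with an immediate \textit{qed}, so your write-up is a self-contained reconstruction rather than a variant of the paper's proof. Your definitional unfolding is the natural argument in this paper's setting, where a constrained step leaves the constraint untouched, and the three essential points all check out: (i) matching and replacement commute with instantiation, giving $(s\sigma)|_p = \ell\tau\sigma$ and $t\sigma = (s\sigma)[r\tau\sigma]_p$; (ii) the value condition for $\mu = \tau\sigma$ follows from $\tau(x) \in \Val \cup \Var(\varphi)$ together with $\sigma(y) \in \Val$ for $y \in \Var(\varphi)$, and since $\LVar(\rho) = \Var(\psi) \cup (\Var(r) \setminus \Var(\ell))$ your case analysis also covers the extra variables of $r$, which must be sent to values for $\mu \vDash \rho$; (iii) the key inclusion $\Var(\psi\tau) \subseteq \Var(\varphi)$ holds because $\Var(\psi) \subseteq \LVar(\rho)$ and the value condition applies to \emph{all} of $\LVar(\rho)$, including variables left fixed by $\tau$ (these are then forced to lie in $\Var(\varphi)$); hence $\psi\tau\sigma$ is ground, and instantiating the valid implication $\varphi \Rightarrow \psi\tau$ with $\sigma$ combined with $\inter{\varphi\sigma} = \top$ gives $\inter{\psi\tau\sigma} = \top$, i.e.\ validity of $\psi\mu$. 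The remaining bookkeeping (restricting $\mu$ to $\Var(\ell) \cup \Var(r) \cup \Var(\psi)$, which changes none of $\ell\mu$, $r\mu$, $\psi\mu$, and satisfiability of $\varphi$ being a hypothesis rather than an obligation) is also handled correctly.
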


\begin{lemma}
\label{lem:key}
\label{lem:repeated-lem-key}
Suppose $s \approx t~\CO{\varphi} \sRab[\geqslanT p]{*}
u \approx v~\CO{\psi}$ with
$\gamma \vDash \varphi$, $\Dom(\gamma) = \Var(\varphi)$ and position
$p$. If $p = 1q$ for a position $q$ then
$s\gamma \Rab[\geqslanT q]{*} u\delta$ and
$t\gamma = v\delta$ for some substitution $\delta$ with
$\delta \vDash \psi$ and $\Dom(\delta) = \Var(\psi)$. If $p = 2q$ for a
position $q$ then $s\gamma = u\delta$ and
$t\gamma \Rab[\geqslanT q]{*} v\delta$ for some substitution
$\delta$ with $\delta \vDash \psi$ and $\Dom(\delta) = \Var(\psi)$.
\end{lemma}

\begin{proof}
We prove the lemma for a single step $s \approx t~\CO{\varphi}
\sRab[\geqslanT p]{} u \approx v~\CO{\psi}$ with
$\gamma \vDash \varphi$, $\Dom(\gamma) = \Var(\varphi)$ and position $p$.
The general statement follows then by an obvious induction argument.
Consider the case $p = 1q$. By definition we can
split the step into
$s \approx t~\CO{\varphi} \sim s' \approx t'~\CO{\varphi'}
\Rab[\geqslanT p]{}
u' \approx t'~\CO{\varphi'} \sim u \approx v~\CO{\psi}$.
Since $\gamma \vDash \varphi$, by definition of $\sim$,
there exists a substitution $\sigma \vDash \varphi'$ with
$\Dom(\sigma) = \Var(\varphi')$ such that
$(s \approx t)\gamma = (s' \approx t')\sigma$.
\Cref{lem:constrained-step-unconstrained} yields
$(s' \approx t')\sigma \Rab[\geqslanT p]{} (u' \approx t')\sigma$.
Again by definition
of $\sim$ we have $(u' \approx t')\sigma = (u \approx v)\tau$ for a
substitution $\tau$ with $\tau \vDash \psi$ and $\Dom(\tau) = \Var(\psi)$.
Hence $s\gamma = s'\sigma \Rab[\geqslanT q]{} u'\sigma = u\tau$
and $t\gamma = t'\sigma = v\tau$.
The case $p = 2q$ is similar.
\qed
\end{proof}

\begin{lemma}
\label{lem:key1}
Suppose $s \approx t~\CO{\varphi} \sRab[\geqslanT p]{*}
u \approx v~\CO{\psi}$ with $\gamma \vDash \varphi$ and position $p$.
If $p = 1q$ for a position $q$ then
$s\gamma \Rab[\geqslanT q]{*} u\delta$ and
$t\gamma = v\delta$ for some substitution
$\delta$ with $\delta \vDash \psi$. If $p = 2q$ for a position $q$ then
$s\gamma = u\delta$ and $t\gamma \sRab[\geqslanT q]{*}
v\delta$ for some substitution $\delta$ with $\delta \vDash \psi$.
\end{lemma}

\begin{proof}
By separating the domain of the substitution to logical and non-logical
variables in \Cref{lem:key} and applying closure under substitution,
we obtain the result.
\qed
\end{proof}

\begin{lemma}
\label{lem:mleadsto to unconstrained}
If $s~\CO{\varphi} \MS t~\CO{\varphi}$ then $s\delta \MS t\delta$ for all
substitutions $\delta \vDash \varphi$.
\end{lemma}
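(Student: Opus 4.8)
The plan is to prove the statement by induction on the derivation of the constrained multi-step $s~\CO{\varphi} \MS t~\CO{\varphi}$ according to the three clauses of \Cref{def:multi-step rewriting constraints}, with an arbitrary $\delta \vDash \varphi$ fixed throughout. Before entering the cases I would record the standard fact that the unconstrained relation $\MS$ of \Cref{def:multi-step rewriting} is \emph{reflexive}: $u \MS u$ holds for every term $u$, by a trivial induction on $u$ using clauses (1) and (2). This is needed because the base case is subtler than it first appears.

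The base case (clause 1) starts from $x~\CO{\varphi} \MS x~\CO{\varphi}$ and must yield $x\delta \MS x\delta$. Since $\delta \vDash \varphi$ constrains only $\Var(\varphi)$, the term $x\delta$ need not be a variable, so clause (1) of the unconstrained definition does not apply directly; instead reflexivity of $\MS$ supplies $x\delta \MS x\delta$ at once. The congruence case (clause 2) is routine: if $f(\seq s)~\CO{\varphi} \MS f(\seq t)~\CO{\varphi}$ because $s_i~\CO{\varphi} \MS t_i~\CO{\varphi}$ for each $i$, the induction hypothesis gives $s_i\delta \MS t_i\delta$, and clause (2) of the unconstrained definition then produces $f(\seq s)\delta \MS f(\seq t)\delta$.

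The real work lies in the rule case (clause 3), where $\ell\sigma~\CO{\varphi} \MS r\tau~\CO{\varphi}$ arises from a rule $\rho\colon \crr{\ell}{r}{\psi} \in \xRrc$ with $\sigma(x) \in \Val \cup \Var(\varphi)$ for all $x \in \LVar(\rho)$, with $\varphi \Rightarrow \psi\sigma$ valid, and with $\sigma~\CO{\varphi} \MS \tau~\CO{\varphi}$. The goal $(\ell\sigma)\delta \MS (r\tau)\delta$ rewrites as $\ell(\sigma\delta) \MS r(\tau\delta)$, so I would apply clause (3) of the unconstrained definition using the composed substitutions $\sigma\delta$ and $\tau\delta$. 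The premise $\sigma\delta \MS \tau\delta$ follows from the induction hypothesis applied componentwise: each $\sigma(x)~\CO{\varphi} \MS \tau(x)~\CO{\varphi}$ is a subderivation, so $\sigma(x)\delta \MS \tau(x)\delta$ for every $x$, which is exactly $\sigma\delta \MS \tau\delta$. The domain bookkeeping required by $\vDash$ is routine.

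The verification $\sigma\delta \vDash \rho$ is the crux and the expected main obstacle. For the value condition, if $x \in \LVar(\rho)$ then $\sigma(x)$ is either a value, which $\delta$ fixes, or a variable of $\Var(\varphi)$, which $\delta \vDash \varphi$ sends into $\Val$; in both cases $x\sigma\delta \in \Val$. For validity of $\psi\sigma\delta$, I would use that $\Var(\psi) \subseteq \LVar(\rho)$ together with $\sigma(x) \in \Val \cup \Var(\varphi)$ forces $\Var(\psi\sigma) \subseteq \Var(\varphi)$, so $\delta$ grounds $\psi\sigma$ into values. Since $\varphi \Rightarrow \psi\sigma$ is valid and $\delta \vDash \varphi$ makes $\varphi\delta$ valid, the implication instantiated by $\delta$ forces $\psi\sigma\delta$ to evaluate to $\top$, as required. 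Combining the value condition with this validity argument establishes $\sigma\delta \vDash \rho$, closing the rule case and completing the induction.
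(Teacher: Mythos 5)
Your proposal is correct and follows essentially the same route as the paper's proof: structural induction on the derivation of $s~\CO{\varphi} \MS t~\CO{\varphi}$, with the rule case handled by applying clause (3) of the unconstrained definition to the composed substitutions $\sigma\delta$ and $\tau\delta$, checking the value condition and deriving validity of $\psi\sigma\delta$ from $\delta \vDash \varphi$ and the validity of $\varphi \Rightarrow \psi\sigma$. Your explicit reflexivity observation in the base case and the inclusion $\Var(\psi\sigma) \subseteq \Var(\varphi)$ are just careful spellings of steps the paper dismisses as trivial, so the two arguments coincide in substance.
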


\begin{proof}
We proceed by induction on $\MS$. In case (1) we have
$x~\CO{\varphi} \MS x~\CO{\varphi}$, and $x\delta \MS x \delta$ holds
trivially. In case (2) we have $s = f(\seq{s})$, $t = f(\seq{t})$ and
$s_i~\CO{\varphi} \MS t_i~\CO{\varphi}$ for $1 \leqslant i \leqslant n$.
From the induction hypothesis we obtain $s_i\delta \MS t_i\delta$ for all
$1 \leqslant i \leqslant n$, which further implies $s\delta \MS t\delta$.
In case (3) we have $s = \ell\sigma$ and $t = r\sigma$ for some rule
$\rho\colon \crr{\ell}{r}{\psi} \in \xRrc$,
$\sigma(x) \in \Val \cup \Var(\varphi)$ for all $x \in \LVar(\rho)$,
$\varphi$ is satisfiable, $\varphi \Rightarrow \psi\sigma$ is valid, and
$\sigma(x)~\CO{\varphi} \MS \tau(x)~\CO{\varphi}$ for all
$x \in \Var(\varphi)$. From $\LVar(\rho) \subseteq \Dom(\sigma)$ and
$s = \ell\sigma$ we infer $\Var(\rho) \subseteq \Dom(\sigma)$.
Without loss of generality we assume
$(\Var(s) \cup \Var(t) \cup \Var(\varphi)) \cap \Var(\rho) = \varnothing$
and restrict the domain of $\sigma$ to $\Var(\rho)$ as other variables are
irrelevant. From the induction hypothesis we obtain
$\sigma(x)\delta \MS \tau(x)\delta$ for all $x \in \Var(\varphi)$.
Moreover, since $\delta \vDash \varphi$ we have
$\delta \vDash \psi\sigma$ and thus also $\sigma\delta \vDash \psi$.
Therefore $s\delta = \ell\sigma\delta \MS r\tau\delta = t\delta$ as
desired.
\qed
\end{proof}

\begin{lemma}
\label{lem:keymto}
If $s\approx t~\CO{\varphi} \sMSab[\geqslant 1]{} u \approx v~\CO{\psi}$
then for all substitutions $\sigma \vDash \varphi$ with
$\Dom(\sigma) = \Var(\varphi)$ there exists $\delta \vDash \psi$
with $\Dom(\delta) = \Var(\psi)$ such that $s\sigma \MS u\delta$ and
$t\sigma = v\delta$.
\end{lemma}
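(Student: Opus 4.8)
The plan is to mirror the proof of \Cref{lem:key}, the single-step analogue, replacing the appeal to \Cref{lem:constrained-step-unconstrained} by its multi-step counterpart \Cref{lem:mleadsto to unconstrained}. Since $\sMS$ is defined as $\sim \cdot \MS \cdot \sim$ (\Cref{def:multi-step rewriting constraints}) and the subscript $\geqslant 1$ forces every contracted redex to lie in the first component, I would first split the assumption as
$s \approx t~\CO{\varphi} \sim s' \approx t'~\CO{\varphi'} \MS u' \approx t'~\CO{\varphi'} \sim u \approx v~\CO{\psi}$,
where the middle multi-step takes place below position~$1$. Consequently the second component is left untouched, so that it literally equals $t'$ on both sides of the $\MS$, while the first component satisfies $s'~\CO{\varphi'} \MS u'~\CO{\varphi'}$.

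Next I would thread the fixed substitution through the two equivalences. From $\sigma \vDash \varphi$ with $\Dom(\sigma) = \Var(\varphi)$, the definition of $\sim$ yields a substitution $\sigma' \vDash \varphi'$ with $\Dom(\sigma') = \Var(\varphi')$ and $(s \approx t)\sigma = (s' \approx t')\sigma'$; in particular $s\sigma = s'\sigma'$ and $t\sigma = t'\sigma'$. Applying \Cref{lem:mleadsto to unconstrained} to $s'~\CO{\varphi'} \MS u'~\CO{\varphi'}$ with $\sigma' \vDash \varphi'$ produces the unconstrained multi-step $s'\sigma' \MS u'\sigma'$. Finally, from $u' \approx t'~\CO{\varphi'} \sim u \approx v~\CO{\psi}$ together with $\sigma' \vDash \varphi'$ and $\Dom(\sigma') = \Var(\varphi')$, the definition of $\sim$ provides a substitution $\delta \vDash \psi$ with $\Dom(\delta) = \Var(\psi)$ and $(u' \approx t')\sigma' = (u \approx v)\delta$, i.e.\ $u'\sigma' = u\delta$ and $t'\sigma' = v\delta$. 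Chaining these identities gives $s\sigma = s'\sigma' \MS u'\sigma' = u\delta$ and $t\sigma = t'\sigma' = v\delta$, which is exactly the claim.

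The one point deserving care is the role of the position bound $\geqslant 1$: I must ensure that the multi-step never contracts a redex at or below position~$2$, so that the second component is syntactically preserved and the conclusion $t\sigma = v\delta$ is an equality rather than a multi-step. This is precisely the bookkeeping carried out in the $p = 1q$ case of \Cref{lem:key}, and here it is if anything simpler, because $\sMSab[\geqslant 1]{}$ denotes a single multi-step: no induction on the number of steps is required, since the structural induction over the term is already absorbed into \Cref{lem:mleadsto to unconstrained}. \Cref{lem:key2} in the main text then follows by dropping the domain restrictions and closing under substitution, exactly as \Cref{lem:key1} is derived from \Cref{lem:key}.
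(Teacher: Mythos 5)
Your proposal is correct and follows essentially the same route as the paper's proof: unfold $\sMS$ into $\sim \cdot \MS_{\geqslant 1} \cdot \sim$, transfer $\sigma$ across the first equivalence, observe that the position bound $\geqslant 1$ forces the multi-step into the first component of $\approx$ (so the second component is syntactically unchanged), apply \Cref{lem:mleadsto to unconstrained}, and transfer back across the second equivalence. The paper merely makes one step slightly more explicit, namely that the restriction to positions $\geqslant 1$ corresponds to case~2 of \Cref{def:multi-step rewriting constraints} with $t' = v'$, which you assert directly by writing the target as $u' \approx t'$.
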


\begin{proof}
By unfolding the definition of $\sMS$ we obtain
$s \approx t~\CO{\varphi} \sim s' \approx t'~\CO{\varphi'}
\MS_{\geqslant 1} u' \approx v'~\CO{\varphi'} \sim u \approx v~\CO{\psi}$.
Let $\sigma$ be a substitution with $\sigma \vDash \varphi$. From the
definition of $\sim$ we obtain a substitution $\tau$ such that
$\tau \vDash \varphi'$, $\Dom(\tau) = \Var(\varphi')$, $s\sigma = s'\tau$
and $t\sigma = t'\tau$. As all contracted redexes in the multi-step
$s' \approx t'~\CO{\varphi'}$ are below the position $1$, this
corresponds to case (2) in~\Cref{def:multi-step rewriting constraints}
with $s'$ and $t'$ being the first and second argument of $\approx$. Hence
$s'~\CO{\varphi'} \MS u'~\CO{\varphi'}$ and $t' = v'$. We therefore obtain
$t'\tau = v'\tau$ and $s'\tau \MS u'\tau$ from
\Cref{lem:mleadsto to unconstrained}. From the equivalence
$u' \approx v'~\CO{\varphi'} \sim u \approx v~\CO{\psi}$ together with
$\tau \vDash \varphi'$ and $\Dom(\tau) = \varphi'$ we obtain a
substitution $\gamma$ such that $\gamma \vDash \psi$,
$\Dom(\gamma) = \Var(\psi)$, $u'\tau = u\gamma$ and $v'\tau = v\gamma$.
Hence
$s\sigma = s'\tau \MS u'\tau = u\gamma$ and
$t\sigma = t'\tau = v'\tau = v\gamma$.
\qed
\end{proof}

\begin{lemma}
\label{lem:key2}
If $s\approx t~\CO{\varphi} \sMSab[\geqslant 1]{} u \approx v~\CO{\psi}$
then for all substitutions $\sigma \vDash \varphi$ there exists
$\delta \vDash \psi$ such that $s\sigma \MS u\delta$ and
$t\sigma = v\delta$.
\end{lemma}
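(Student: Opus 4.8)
The plan is to prove \Cref{lem:key2} by induction on the derivation of the multi-step $s \approx t~\CO{\varphi} \sMSab[\geqslant 1]{} u \approx v~\CO{\psi}$ on constrained terms. First I would unfold the definition of $\sMS$ as $\sim \cdot \MS \cdot \sim$. So there are constrained terms $s_0 \approx t_0~\CO{\varphi_0}$ and $s_1 \approx t_1~\CO{\psi_0}$ with
\[
s \approx t~\CO{\varphi} \sim s_0 \approx t_0~\CO{\varphi_0} \MS s_1 \approx t_1~\CO{\psi_0} \sim u \approx v~\CO{\psi},
\]
where the middle step is a genuine multi-step taking place at positions $\geqslant 1$, i.e. it only touches the left component $s_0$ and leaves the right component fixed ($t_0 = t_1$ as the step is below position $1$). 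Given $\sigma \vDash \varphi$, the first equivalence $\sim$ yields a substitution $\sigma_0 \vDash \varphi_0$ with $s\sigma = s_0\sigma_0$ and $t\sigma = t_0\sigma_0$ (this is exactly what the definition of $\sim$ on constrained terms provides, since $\sigma$ instantiates all constraint variables to values). The core of the argument is then to show the key claim for the genuine multi-step component.

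\textbf{Key claim.} If $s_0~\CO{\varphi_0} \MS s_1~\CO{\varphi_0}$ in the sense of \Cref{def:multi-step rewriting constraints} and $\sigma_0 \vDash \varphi_0$, then $s_0\sigma_0 \MS s_1\sigma_0$ in the sense of \Cref{def:multi-step rewriting} on plain terms. I would prove this by induction on the structure of the constrained multi-step. In the variable case (1) and the congruence case (2) the claim is immediate from the induction hypothesis. The interesting case is (3): here $s_0 = \ell\rho_{\mathsf{s}}$, $s_1 = r\rho_{\mathsf{t}}$ for a rule $\crr{\ell}{r}{\theta} \in \xRrc$ with $\rho_{\mathsf{s}}(x) \in \Val \cup \Var(\varphi_0)$ for all logical variables, $\varphi_0 \Rightarrow \theta\rho_{\mathsf{s}}$ valid, and $\rho_{\mathsf{s}}~\CO{\varphi_0} \MS \rho_{\mathsf{t}}~\CO{\varphi_0}$ componentwise. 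Applying $\sigma_0$, I get $s_0\sigma_0 = \ell(\rho_{\mathsf{s}}\sigma_0)$ and $s_1\sigma_0 = r(\rho_{\mathsf{t}}\sigma_0)$. Since $\sigma_0 \vDash \varphi_0$ and $\varphi_0 \Rightarrow \theta\rho_{\mathsf{s}}$ is valid, the substitution $\rho_{\mathsf{s}}\sigma_0$ respects the rule $\crr{\ell}{r}{\theta}$ (the logical variables now map to genuine values because $\sigma_0$ sends the constraint variables to values), and the induction hypothesis applied componentwise gives $\rho_{\mathsf{s}}\sigma_0 \MS \rho_{\mathsf{t}}\sigma_0$. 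Clause (3) of \Cref{def:multi-step rewriting} then yields $\ell(\rho_{\mathsf{s}}\sigma_0) \MS r(\rho_{\mathsf{t}}\sigma_0)$, which is exactly $s_0\sigma_0 \MS s_1\sigma_0$.

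Having established the key claim, I assemble the result. The claim gives $s_0\sigma_0 \MS s_1\sigma_0$, and since the genuine multi-step only acted below position $1$ we have $t_0\sigma_0 = t_1\sigma_0$. Finally, the second equivalence $s_1 \approx t_1~\CO{\psi_0} \sim u \approx v~\CO{\psi}$ must be used to transport $\sigma_0$ to a substitution $\delta \vDash \psi$ with $u\delta = s_1\sigma_0$ and $v\delta = t_1\sigma_0$; combining everything yields $s\sigma = s_0\sigma_0 \MS s_1\sigma_0 = u\delta$ and $t\sigma = t_0\sigma_0 = t_1\sigma_0 = v\delta$, as required. The main obstacle I anticipate is the bookkeeping around the two $\sim$-steps: the definition of $\sim$ only guarantees matching \emph{ground-by-value} instances, so I must be careful that the substitution $\sigma_0$ produced from $\sim$ genuinely sends all of $\Var(\varphi_0)$ to values (so that the rule-respecting hypothesis $\sigma_0 \vDash \varphi_0$ transfers to $\rho_{\mathsf{s}}\sigma_0 \vDash \crr{\ell}{r}{\theta}$ in the plain setting), and that the final $\sim$-step can be applied to the \emph{pair} $s_1 \approx t_1$ coherently rather than to the two components independently. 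Once the correspondence between constraint-satisfying substitutions on both sides of each $\sim$ is set up carefully, the rest is a routine structural induction.
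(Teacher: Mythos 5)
Your proposal is correct and takes essentially the same route as the paper: your key claim is exactly the paper's \Cref{lem:mleadsto to unconstrained} (proved by the same induction), and your transport of substitutions through the two $\sim$-steps is its \Cref{lem:keymto}. The bookkeeping issue you flag at the end --- the definition of $\sim$ only applies to substitutions with domain exactly $\Var(\varphi)$, while $\sigma$ may also instantiate non-logical variables --- is resolved just as the paper does when deriving \Cref{lem:key2} from \Cref{lem:keymto}: split $\sigma$ into its logical part (domain $\Var(\varphi)$) and the remainder, push the logical part through the $\sim$-steps, and re-attach the remainder afterwards using closure of $\MS$ under substitution.
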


\begin{proof}
This follows by dropping the restriction on the substitution in
\Cref{lem:keymto}, exactly like in the proof of \Cref{lem:key1}.
\qed
\end{proof}

\begin{lemma}
\label{lem:almost development closed}
If a constrained critical pair $s \approx t~\CO{\varphi}$ is almost
development closed then for all substitutions $\sigma$ with
$\sigma \vDash \varphi$ we have
$s\sigma \MS \cdot \mr{\prescript{*}{}{\L}} t\sigma$.
\end{lemma}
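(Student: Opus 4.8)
The plan is to assemble \Cref{lem:key2} and \Cref{lem:key1} with the triviality of the closing constrained pair, following the two branches in the definition of almost development closedness. Fix a substitution $\sigma \vDash \varphi$. If $s \approx t~\CO{\varphi}$ is not an overlay it is development closed, so there is a trivial $u \approx v~\CO{\psi}$ with $s \approx t~\CO{\varphi} \sMSab[\geqslant 1]{} u \approx v~\CO{\psi}$. \Cref{lem:key2} then yields $\delta \vDash \psi$ with $s\sigma \MS u\delta$ and $t\sigma = v\delta$, and triviality gives $u\delta = v\delta$. Hence $s\sigma \MS u\delta = v\delta = t\sigma$, so the empty reduction on the right closes the diagram with $w = t\sigma$.

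For the overlay case, almost development closedness supplies a trivial $u \approx v~\CO{\psi}$ together with an intermediate constrained pair $s' \approx t'~\CO{\chi}$ such that $s \approx t~\CO{\varphi} \sMSab[\geqslant 1]{} s' \approx t'~\CO{\chi}$ and $s' \approx t'~\CO{\chi} \sRab[\geqslant 2]{*} u \approx v~\CO{\psi}$. First I would apply \Cref{lem:key2} to the multi-step part, obtaining $\delta_1 \vDash \chi$ with $s\sigma \MS s'\delta_1$ and $t\sigma = t'\delta_1$. Then I would apply \Cref{lem:key1} to the rewrite part using $\delta_1$ as the instantiating substitution; since every step lies below position $2$ (so $p = 2$ and $q = \epsilon$), this produces $\delta_2 \vDash \psi$ with $s'\delta_1 = u\delta_2$ and $t'\delta_1 \Ra[*] v\delta_2$.

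It then remains only to compose the pieces. On the left, $s\sigma \MS s'\delta_1 = u\delta_2$; on the right, $t\sigma = t'\delta_1 \Ra[*] v\delta_2$. Invoking triviality of $u \approx v~\CO{\psi}$ with $\delta_2 \vDash \psi$ gives $u\delta_2 = v\delta_2$, so with $w = u\delta_2 = v\delta_2$ we obtain $s\sigma \MS w$ and $t\sigma \Ra[*] w$, that is $s\sigma \MS \cdot \La[*] t\sigma$, as desired.

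I expect the main difficulty to be purely organizational rather than conceptual: the substitution $\delta_1$ returned by \Cref{lem:key2} must be fed verbatim as the instantiation into \Cref{lem:key1} so that the intermediate terms $s'\delta_1$ and $t'\delta_1$ agree on both sides of the join, and one must carefully track which constraint among $\varphi$, $\chi$, $\psi$ each substitution respects. No fresh rewriting argument is required, since the genuine work has already been discharged in the two key lemmas and in the triviality of the closing pair; the proof is an assembly of these ingredients along the two cases of the definition.
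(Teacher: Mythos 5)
Your proof is correct and follows essentially the same route as the paper: apply \Cref{lem:key2} to the multi-step part, feed the resulting substitution into \Cref{lem:key1} for the tail rewrite sequence, and close with triviality of $u \approx v~\CO{\psi}$. The only difference is organizational — the paper treats both cases uniformly by viewing the non-overlay (development closed) case as having an empty $\sRab[\geqslant 2]{*}$ tail, whereas you split the two cases explicitly.
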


\begin{proof}
Let $s \approx t~\CO{\varphi}$ be an almost development closed constrained
critical pair, and $\sigma \vDash \varphi$ some substitution. From
\Cref{def:almost development closed} we obtain
\begin{align}
\label{adc}
s \approx t~\CO{\varphi} \sMS_{\geqslant 1} u' \approx v'~\CO{\psi'}
\sRab[\geqslant 2]{*} u \approx v~\CO{\psi}
\end{align}
where $u\tau = v\tau$ for all $\tau \vDash \psi$ for some constrained
term $u' \approx v'~\CO{\psi'}$. We apply \Cref{lem:key2} to the
first step in \eqref{adc}.
This yields a substitution $\delta$ where
$s\sigma \MS u' \delta$, $t\sigma = v'\delta$
and $\delta \vDash \psi'$.
For the second part of \eqref{adc} we use
\Cref{lem:key1} and obtain $v'\delta \Ra[*] v \gamma$,
$u'\delta = u\gamma$ for some $\gamma \vDash \psi$.
Moreover we have $u\gamma = v\gamma$. Hence
$s\sigma \MS u'\delta = u\gamma = v\gamma
\mr{\prescript{*}{}{\L}} v'\delta = t\sigma$.
\qed
\end{proof}

\begin{theorem}
\label{thm:almost development closedness}
If an \textup{LCTRS} $\xR$ is almost development closed then
so is $\ov{\xR}$.
\end{theorem}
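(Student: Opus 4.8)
The plan is to verify directly that every critical pair of $\ov{\xR}$ is almost development closed, following the diagram in \Cref{fig:intuition-transformation}. So I would fix an arbitrary critical pair $s \approx t \in \CP(\ov{\xR})$. By \Cref{thm:CP} there are a constrained critical pair $\ccp{s'}{t'}{\varphi'} \in \CCP(\xR)$ and a substitution $\gamma$ with $\gamma \vDash \varphi'$, $s = s'\gamma$ and $t = t'\gamma$. Inspecting the construction in the proof of \Cref{thm:CP}, the overlap yielding $\ccp{s'}{t'}{\varphi'}$ takes place at the very position $p$ of the $\ov{\xR}$-critical peak from which $s \approx t$ originates; hence $s \approx t$ is an overlay (i.e.\ $p = \epsilon$) if and only if $\ccp{s'}{t'}{\varphi'}$ is an overlay. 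Making this correspondence of positions precise is the pivot of the whole argument and the first thing I would pin down.

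Since $\xR$ is almost development closed, $\ccp{s'}{t'}{\varphi'}$ is almost development closed, and I would split on whether the overlap is an overlay. If it is \emph{not} an overlay, then by \Cref{def:almost development closed} the pair is development closed, i.e.\ $\ccp{s'}{t'}{\varphi'} \sMSab[\geqslant 1]{} \ccp{u}{v}{\psi}$ for some trivial $\ccp{u}{v}{\psi}$. Applying \Cref{lem:key2} to this multi-step with the instance $\sigma = \gamma$ yields a substitution $\delta \vDash \psi$ with $s'\gamma \MS u\delta$ and $t'\gamma = v\delta$. Triviality of $\ccp{u}{v}{\psi}$ forces $u\delta = v\delta$, so $s = s'\gamma \MS u\delta = v\delta = t'\gamma = t$, i.e.\ $s \MS t$. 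Thus the non-overlay critical pair $s \approx t$ of $\ov{\xR}$ is development closed.

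If the overlap \emph{is} an overlay, I would simply invoke \Cref{lem:almost development closed} with $\sigma = \gamma$, obtaining $s = s'\gamma \MS \cdot \La[*] t'\gamma = t$, which is exactly what the definition demands for an overlay. Collecting the two cases, every critical pair of $\ov{\xR}$ is almost development closed, so $\ov{\xR}$ is almost development closed, as claimed.

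The main obstacle I anticipate is the bookkeeping around the position correspondence: I need the fact—only implicit in the proof of \Cref{thm:CP}—that the overlap producing $\ccp{s'}{t'}{\varphi'}$ sits at the same position $p$ as the underlying $\ov{\xR}$-peak, so that ``overlay'' is transported faithfully in both directions. A secondary point that must be handled carefully is that the non-overlay case has to deliver the full development closure $s \MS t$ rather than the weaker $s \MS \cdot \La[*] t$; this is precisely why I apply \Cref{lem:key2} together with triviality directly, instead of routing through \Cref{lem:almost development closed}, whose conclusion is only the ``almost'' form and is therefore suited to the overlay case. The remaining steps—reflexivity of $\MS$ for the trivial instances and the instantiation $\sigma = \gamma$—are routine.
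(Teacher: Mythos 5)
Your proof is correct and follows essentially the same route as the paper's: invoke \Cref{thm:CP} to obtain a corresponding constrained critical pair, split on whether it is an overlay, and close the non-overlay case via \Cref{lem:key2} plus triviality and the overlay case via \Cref{lem:almost development closed}. If anything, you are slightly more careful than the paper, which cites \Cref{lem:almost development closed} for both cases even though its stated conclusion is only the weaker $s\sigma \MS \cdot \La[*] t\sigma$, and which leaves the overlay/position correspondence coming from the proof of \Cref{thm:CP} implicit.
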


\begin{proof}
Take any critical pair $s \approx t$ from $\ov{\xR}$.
From \Cref{thm:CP} we know that there exists a
constrained critical pair $s' \approx t'~\CO{\varphi}$
in $\xR$ where $s'\sigma = s$ and $t'\sigma = t$
for some $\sigma \vDash \varphi$.
Since the constrained critical pair must be almost development closed,
\Cref{lem:almost development closed} yields
$s = s'\sigma \MS \cdot \La[*] t'\sigma = t$
if it is an overlay and $s = s'\sigma \MS t'\sigma = t$ otherwise.
This proves that $\ov{\xR}$ is almost development closed.
\qed
\end{proof}

Interestingly, the converse does not hold, as seen in the following
example.

\begin{example}
\label{exa:trs-adc-notin-lctrs}
Consider the LCTRS $\xR$ over the theory \textsf{Ints} with
the rules
\begin{align*}
\m{f}(x) &\R \m{g}(x) &
\m{g}(x) &\R \m{h}(\m{2})~\CO{x = \m{2}z}\\
\m{f}(x) &\R \m{h}(x)~\CO{\m{1} \leqslant x \leqslant \m{2}} &
\m{g}(x) &\R \m{h}(\m{1})~\CO{x = \m{2}z + \m{1}}
\end{align*}
The TRS $\ov{\xR}$ consists of the rules
\begin{align*}
\m{f}(x) &\R \m{g}(x) &
\m{f}(\m{1}) &\R \m{h}(\m{1}) &
\m{g}(n) &\R \m{h}(\m{1}) \quad \text{for all odd $n \in \ZZ$} \\
&&
\m{f}(\m{2}) &\R \m{h}(\m{2}) &
\m{g}(n) &\R \m{h}(\m{2}) \quad \text{for all even $n \in \ZZ$}
\end{align*}
and has two (modulo symmetry) critical pairs
$\m{g}(\m{1}) \approx \m{h}(\m{1})$ and
$\m{g}(\m{2}) \approx \m{h}(\m{2})$.
Since $\m{g}(\m{1}) \MS \m{h}(\m{1})$ and $\m{g}(\m{2}) \MS \m{h}(\m{2})$,
$\ov{\xR}$ is almost development closed. The constrained critical pair
$\m{g}(x) \approx \m{h}(x)~\CO{\m{1} \leqslant x \leqslant \m{2}}$
is not almost development closed, since it is a normal form with
respect to the rewrite relation on constrained terms.
\end{example}

This also makes intuitive sense, since a rewrite step
$s \approx t~\CO{\varphi} \Rs u \approx v~\CO{\psi}$ implies that the same
step can be taken on all instances $s\sigma \approx t\sigma$ where
$\sigma \vDash \varphi$. However it may be the case, like in the above
example, that different instances of the constrained critical pair
require different steps to obtain a closing sequence, which cannot
directly be modeled using rewriting on constrained terms.

Since left-linearity of $\ov{\xR}$ is preserved,
the following corollary is obtained from Theorems~\ref{thm:V}
and~\ref{thm:almost development closedness}.
In fact $\xR$ only has to be linear in the variables $x \notin \LVar$,
since that is sufficient for $\ov{\xR}$ to be linear.

\begin{corollary}
\label{cor:almost-dev-closed}
Left-linear almost development closed \textup{LCTRS}s are confluent. \qed
\end{corollary}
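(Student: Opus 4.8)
The plan is to obtain confluence of $\xR$ by transporting it across the transformation $\xR \mapsto \ov{\xR}$ and appealing to the corresponding TRS result \Cref{thm:V}. The argument is short because the transformation preserves both hypotheses of \Cref{thm:V}---left-linearity and almost development closure---as well as the rewrite relation itself (\Cref{lem:overline relation identity}), so confluence established on the TRS side pulls back to $\xR$.

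First I would transfer almost development closure. Since $\xR$ is almost development closed by assumption, \Cref{thm:almost development closedness} immediately yields that $\ov{\xR}$ is almost development closed. This is the step doing the real work, and it has already been proved. Next I would verify that $\ov{\xR}$ is left-linear. Each type~(1) rule of $\ov{\xR}$ has the form $\ell\tau \R r\tau$, where $\tau$ maps the logical variables $\LVar(\rho)$ of some rule $\rho\colon \CRR$ to values and $\Dom(\tau) = \LVar(\rho)$. As values are ground, $\ell\tau$ retains only the non-logical variables of $\ell$, each with the same multiplicity as in $\ell$; hence left-linearity of $\ell$ is inherited by $\ell\tau$. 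In fact this uses only that $\ell$ is linear in its variables outside $\LVar$, matching the refinement noted after the corollary. The type~(2) rules $f(\seq{v}) \R \inter{f(\seq{v})}$ have ground, hence linear, left-hand sides and cause no trouble.

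Finally I would invoke \Cref{thm:V}: $\ov{\xR}$ is a left-linear almost development closed TRS and is therefore confluent. Since $\RbR$ and $\RboR$ coincide by \Cref{lem:overline relation identity} and confluence is a property purely of the rewrite relation, confluence of $\ov{\xR}$ is exactly confluence of $\xR$. The only point requiring genuine care is the left-linearity check above; everything else is a direct citation of the preceding results, so I expect no substantial obstacle.
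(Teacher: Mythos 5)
Your proposal is correct and follows exactly the paper's route: transfer almost development closure to $\ov{\xR}$ via \Cref{thm:almost development closedness}, observe that left-linearity is preserved by the transformation (including the refinement that linearity is only needed outside $\LVar$), apply \Cref{thm:V}, and pull confluence back to $\xR$ through \Cref{lem:overline relation identity}. The paper states this corollary without a displayed proof, and your write-up supplies precisely the details it leaves implicit.
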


\begin{example}
\label{exa:almost-dev-closed}
The LCTRS $\xR$ over the theory $\m{Ints}$ with the rules
\begin{align*}
\m{f}(x,y) &\R
\m{h}(\m{g}(y,\m{2} \cdot \m{2}))~\CO{x \leqslant y \land y = \m{2}} &
\m{g}(x,y) &\R \m{g}(y,x) & \m{h}(x) &\R x \\
\m{f}(x,y) &\R \m{c}(\m{4},x)~\CO{y \leqslant x} &
\m{c}(x,y) &\R \m{g}(\m{4},\m{2})~\CO{x \neq y}
\end{align*}
admits the two constrained critical pairs (with simplified
constraints)
\begin{align*}
\m{h}(\m{g}(y,\m{2} \cdot \m{2})) &\approx \m{c}(\m{4},x)~\CO{\varphi} &
\m{c}(\m{4},x) &\approx \m{h}(\m{g}(y,\m{2} \cdot \m{2}))~\CO{\varphi}
\end{align*}
Both are almost development closed:
\begin{align*}
&\m{h}(\m{g}(y,\m{2} \cdot \m{2})) \approx \m{c}(\m{4},x)~\CO{\varphi} &
&\m{c}(\m{4},x) \approx \m{h}(\m{g}(y,\m{2} \cdot \m{2}))~\CO{\varphi} \\
\sMSab[\geqslant 1]{}~ &\m{g}(\m{4},\m{2}) \approx
\m{c}(\m{4},x)~\CO{x = \m{2}} &
\sMSab[\geqslant 1]{}~ &\m{g}(\m{4},\m{2}) \approx
\m{h}(\m{g}(y,\m{2} \cdot \m{2}))~\CO{y = \m{2}} \\
\sRab[\geqslant 2]{}~ &\m{g}(\m{4},\m{2}) \approx
\m{g}(\m{4},\m{2})~\CO{\m{true}}&
\sRab[\geqslant 2]{*}~ &\m{g}(\m{4},\m{2}) \approx
\m{g}(\m{4},\m{2})~\CO{\m{true}}
\end{align*}
Here $\varphi$ is the constraint $x = y \land y = \m{2}$.
Hence $\xR$ is almost development closed. Since $\xR$ is 
left-linear, confluence follows by~\Cref{cor:almost-dev-closed}.
\end{example}

\section{Parallel Critical Pairs}
\label{sec:pcps}

In this section we extend the confluence result by Toyama~\cite{T81}
based on parallel critical pairs to LCTRSs. Recently there is
a renewed interest in this result;
Shintani and Hirokawa proved in~\cite{SH22} that it subsumes Toyama's
later confluence result in~\cite{T88}. The latter was already lifted
to LCTRSs in~\cite{SM23} and is also subsumed by
\Cref{cor:almost-dev-closed}. The result of Toyama~\cite{T81} is
a proper extension of the confluence criterion on parallel critical pairs
by Gramlich~\cite{G96}. In the sequel we mainly follow the
notions from~\cite{SH22}.

\begin{definition}
\label{def:parallel rewriting}
Let $\xR$ be an \textup{LCTRS}. The parallel rewrite relation $\pto$ on
terms is defined inductively as follows:
\begin{enumerate}
\item
$x \pto x$ for all variables $x$,
\item
$f(\seq{s}) \pto f(\seq{t})$ if $s_i \pto t_i$ for
$1 \leqslant i \leqslant n$,
\item
$\ell\sigma \pto r\sigma$ if $\crr{\ell}{r}{\varphi} \in \xRrc$ and
$\sigma \vDash \crr{\ell}{r}{\varphi}$
\end{enumerate}
We extend $\pto$ to constrained terms inductively as follows:
\begin{enumerate}
\item
$x~\CO{\varphi} \pto x~\CO{\varphi}$ for all variables $x$,
\item
$f(\seq{s})~\CO{\varphi} \pto f(\seq{t})~\CO{\varphi}$ if
$s_i~\CO{\varphi} \pto t_i~\CO{\varphi}$ for $1 \leqslant i \leqslant n$,
\item
$\ell\sigma~\CO{\varphi} \pto r\sigma~\CO{\varphi}$ if
$\rho\colon \crr{\ell}{r}{\psi} \in \xRrc$,
$\sigma(x) \in \Val \cup \Var(\varphi)$ for all $x \in \LVar(\rho)$,
$\varphi$ is satisfiable and $\varphi \Rightarrow \psi\sigma$ is valid.
\end{enumerate}
The parallel rewrite relation $\spto$ on constrained terms is defined as
$\sim \cdot \pto \cdot \sim$.
\end{definition}

Let $s$ be a term and $P \subseteq \Pos(s)$ be a set of parallel
positions. Given terms $t_p$ for $p \in P$, we denote by
$s[t_p]_{p \iN P}$ the simultaneous replacement of the terms at
position $p \in P$ in $s$ by $t_p$. We recall the definition of parallel
critical pairs for TRSs.

\begin{definition}
\label{def:pcp}
Let $\xR$ be a \textup{TRS}, $\rho\colon \ell \R r$ a rule in $\xR$, and
$P \subseteq \FPos(\ell)$ a non-empty set of parallel positions.
For every $p \in P$ let $\rho_p\colon \ell_p \R r_p$ be a variant of a
rule in $\xR$. The peak $\ell\sigma[r_p\sigma]_{p \iN P} \rpto
\ell\sigma \Rb[\epsilon,\xR] r\sigma$ forms a
\emph{parallel critical pair}
$\ell\sigma[r_p\sigma]_{p \iN P} \approx r\sigma$ if the following
conditions are satisfied:
\begin{enumerate}
\item
$\Var(\rho_1) \cap \Var(\rho_2) = \varnothing$ for different rules
$\rho_1$ and $\rho_2$ in $\SET{\rho} \cup \SET{\rho_p \mid p \in P}$,
\item
$\sigma$ is an mgu of $\SET{\ell_p \approx \ell|_p \mid p \in P}$,
\item
if $P = \SET{\epsilon}$ then $\rho_\epsilon$ is not a variant of $\rho$.
\end{enumerate}
The set of all constrained parallel critical pairs of $\xR$ is denoted by
$\PCP(\xR)$.
\end{definition}

We lift this notion to the constrained setting and define it for LCTRSs.

\begin{definition}
\label{def:cpcp}
Let $\xR$ be an \textup{LCTRS}, $\rho\colon \ell \R r~\CO{\varphi}$ a
rule in $\xRrc$, and $P \subseteq \FPos(\ell)$ a non-empty set of parallel
positions. For every $p \in P$ let
$\rho_p\colon \ell_p \R r_p~\CO{\varphi_p}$ be a variant of a rule in
$\xRrc$. Let $\psi = \EC_{\rho} \land
\bigwedge_{p \iN P} \EC_{\rho_p}$ and $\Phi = \varphi\sigma \land
\psi\sigma \land \bigwedge_{p \iN P} \varphi_p\sigma$. The peak
$\ell\sigma[r_p\sigma]_{p \iN P}~\CO{\Phi} \rpto \ell\sigma~\CO{\Phi}
\Rb[\epsilon,\xR] r\sigma~\CO{\Phi}$ forms a
\emph{constrained parallel critical pair}
$\ell\sigma[r_p\sigma]_{p \iN P} \approx r\sigma~\CO{\Phi}$ if the
following conditions are satisfied:
\begin{enumerate}
\item
$\Var(\rho_1) \cap \Var(\rho_2) = \varnothing$ for different rules
$\rho_1$ and $\rho_2$ in $\SET{\rho} \cup \SET{\rho_p \mid p \in P}$,
\item
$\sigma$ is an mgu of $\SET{\ell_p = \ell|_p \mid p \in P}$
such that $\sigma(x) \in \Val \cup \xV$ for all
$x \in \LVar(\rho) \cup \bigcup_{p \iN P} \LVar(\rho_p)$,
\item
$\varphi\sigma \land \bigwedge_{p \iN P} \varphi_p\sigma$ is satisfiable,
\item
if $P = \SET{\epsilon}$ then $\rho_\epsilon$ is not a variant of $\rho$
or $\Var(r) \nsubseteq \Var(\ell)$.
\end{enumerate}
A constrained peak forming a constrained parallel critical pair is
called a \emph{constrained parallel critical peak}. The set of all
constrained parallel critical pairs of $\xR$ is denoted by $\CPCP(\xR)$.
\end{definition}

For a term $t$ and a set of parallel positions $P$ in $t$, we write
$\Var(t,P)$ to denote $\bigcup_{p \iN P} \Var(t|_p)$.
For a set of parallel positions $P$ we denote by $\pto^P$ that each
rewrite step obtained in case~(3) of
\Cref{def:parallel rewriting}
is performed at a position $p \in P$ and no two steps share a position.
Moreover, for a set of parallel positions $P$ and a position $q$ we
denote by $\pto_{\geqslanT q}^P$ that $p \geqslant q$ for all $p \in P$.

\begin{definition}
\label{def:toy81-1}
A critical pair $s \approx t$ is \emph{1-parallel closed}
if $s \pto \cdot \La[*] t$. A \textup{TRS} is 1-parallel closed if all its
critical pairs are 1-parallel closed. A parallel critical pair
$\ell\sigma[r_p\sigma]_{p \iN P} \approx r\sigma$ originating from
the peak
$\ell\sigma[r_p\sigma]_{p \iN P} \rpto \ell\sigma \Rb[\epsilon] r\sigma$
is \emph{2-parallel closed} if
there exists a term $v$ and a set of parallel positions $Q$ such that
$\ell\sigma[r_p\sigma]_{p \iN P} \Ra[*] v \rptoa[Q] r\sigma$ with
$\Var(v,Q) \subseteq \Var(\ell\sigma,P)$.
A \textup{TRS} is 2-parallel closed if all its parallel critical pairs are
2-parallel closed.
A \textup{TRS} is parallel closed if it is 1-parallel closed and
2-parallel closed.
\end{definition}

The following result from \cite{T81} has recently
been formalized in Isabelle~\cite{HKST24}.

\begin{theorem}
\label{thm:toy81}
Left-linear parallel closed \textup{TRS}s are confluent.
\qed
\end{theorem}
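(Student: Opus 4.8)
The plan is to derive confluence of $\Ra$ from a strong confluence property of the parallel rewrite relation $\pto$ of \Cref{def:parallel rewriting}. Since $\Ra \subseteq \pto \subseteq \Ra[*]$, the relations $\Ra$ and $\pto$ have the same reflexive--transitive closure, so it suffices to prove the commutation inclusion
\[
{\rpto} \cdot {\pto} \;\subseteq\; {\Ra[*]} \cdot {\rpto}.
\]
Under $\Ra \subseteq \pto \subseteq \Ra[*]$ this is precisely Huet's strong confluence of $\pto$: one branch of the closing valley is an $\Ra[*]$ sequence (equivalently a $\pto$-sequence) and the other a single $\pto$ step, which yields confluence of $\pto$ and hence of $\Ra$. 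Thus the whole argument reduces to closing an arbitrary parallel peak $t_1 \rpto s \pto t_2$ into the shape $t_1 \Ra[*] \cdot \rpto t_2$, which I would establish by induction on $s$, fixing the sets $P_1$ and $P_2$ of parallel redex positions used on the two sides and distinguishing cases by how the contracted redexes interact.

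When the redexes of $P_1$ and $P_2$ are pairwise disjoint, or each lies below a variable position of the redexes of the other side, the peak is closed by the standard parallel moves lemma: a parallel contraction on one side is replicated inside every copy of a matched variable subterm produced by the other side, and conversely. This is exactly where left-linearity of $\xR$ is consumed, since it guarantees that contracting an outer redex neither duplicates nor merges an inner redex in an uncontrolled way, so that the residual parallel steps are well defined and can be reassembled.

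The genuinely overlapping case---where a redex contracted on one side shares a function position with a redex contracted on the other---is where the parallel-closedness hypotheses enter. A single inner redex overlapping a single outer redex at a non-variable position yields an instance of an ordinary critical pair, closed by the 1-parallel closedness shape $\pto \cdot \La[*]$ (\Cref{def:toy81-1}); a whole parallel family of inner redexes sitting inside one outer redex yields an instance of a parallel critical pair in $\PCP(\xR)$ (\Cref{def:pcp}), closed by the 2-parallel closedness shape $\ell\sigma[r_p\sigma]_{p\in P} \Ra[*] v \rptoa[Q] r\sigma$. In each case I would instantiate the abstract closing sequence by the extension of the overlap's unifier, graft it into the surrounding context, and compose it with the residual parallel steps supplied on the non-overlapping part by the induction hypothesis.

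The hard part is the bookkeeping in this overlapping case. The variable side condition $\Var(v,Q) \subseteq \Var(\ell\sigma,P)$ of 2-parallel closedness is precisely what makes the grafting legitimate: it forces the closing reverse-parallel step $\rptoa[Q]$ to act only on subterms descended from the matched left-hand side, so the parallel steps performed elsewhere in $s$ can be carried through the $\Ra[*]$ tail and recombined into a single final $\rpto$ step. Reconciling the two closing shapes---the $\pto \cdot \La[*]$ delivered by the 1-parallel condition and the $\Ra[*] \cdot \rpto$ delivered by the 2-parallel condition---into the uniform $\Ra[*] \cdot \rpto$ diagram demanded by strong confluence, while tracking the residuals of the non-participating redexes, is the delicate point. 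Once the parallel moves lemma and these two grafted diagrams are in place, the induction closes routinely, following \cite{SH22}.
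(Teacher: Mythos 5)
The paper does not actually prove \Cref{thm:toy81}: it is Toyama's result, imported from \cite{T81} (and only recently formally verified \cite{HKST24,KM23a,KM23b}), so your attempt has to be measured against that proof. Your reduction is sound: since ${\Ra} \subseteq {\pto} \subseteq {\Ra[*]}$, the inclusion ${\rpto} \cdot {\pto} \subseteq {\Ra[*]} \cdot {\rpto}$ is strong confluence of $\pto$ and yields confluence of $\Ra$. The genuine gap is that this bare inclusion is too weak to be provable by the induction you describe. What Toyama's argument establishes by induction is the \emph{strengthened} statement: if $t_1 \rptoa[P_1] s \pto^{P_2} t_2$ then $t_1 \Ra[*] v \rptoa[Q] t_2$ for some $v$ and $Q$ with $\Var(v,Q) \subseteq \Var(s,P_1)$. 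You treat the variable condition of \Cref{def:toy81-1} as a hypothesis consumed once, to justify grafting a closed parallel critical pair into its context; in fact it is an invariant that the \emph{conclusion} of the induction statement must itself carry. The reason is the merge you allude to: when the outer redex (on the $\pto$ side) also has $P_1$-steps below its variable positions, the closing step $v \rptoa[Q] t_2$ obtained for the overlapping part must be combined with those residual steps into a \emph{single} parallel step, and two consecutive parallel steps do not compose in general. The condition $\Var(v,Q) \subseteq \Var(s,P_1)$ is exactly what makes the combination possible: it guarantees that the reducts $v|_q$ for $q \in Q$ are instantiated identically before and after the residual rewriting (which takes place under variables not below $P_1$), so the $Q$-contractions and the remaining residual steps fuse into one $\pto$-step into the common term. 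Without this invariant the inductive step cannot be completed; with it, the side condition in 2-parallel closedness is simply its base case.

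Your case analysis is also mis-oriented. The dichotomy driving the proof is not ``one inner redex gives an ordinary critical pair, a parallel family gives a parallel critical pair'', but on \emph{which side of the peak the outer redex lies}, because the two closedness conditions have opposite orientations. When several $P_1$-redexes lie inside one $P_2$-redex, the peak instantiates a parallel critical pair (\Cref{def:pcp}) and 2-parallel closedness delivers exactly the required shape: $\Ra[*]$ on the $t_1$ side and a single $\rpto$ on the $t_2$ side. But when several $P_2$-redexes lie inside one $P_1$-redex, 2-parallel closedness is unusable---its orientation is backwards for this configuration---and the proof must fall back on ordinary critical pairs and 1-parallel closedness, whose shape $\pto \cdot \La[*]$ closes only one overlap at a time; the whole family is then handled by a further induction on the amount of overlap, interleaved with the variable bookkeeping above. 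This asymmetric treatment of the two kinds of nesting---the very reason the theorem needs both conditions---is precisely the part your proposal defers as ``routine'', and it is where the real work of Toyama's proof lies.
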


In the remainder of this section we extend this result
to LCTRSs.
To this end we introduce the notion
$\TVar(t,\varphi) = \Var(t) \setminus \Var(\varphi)$ denoting the set of
non-logical variables in term $t$ with respect to the logical
constraint $\varphi$.
We restrict this to non-logical variables in subterms
below a set of parallel positions $P$ in $t$:
$\TVar(t,\varphi,P) = \bigcup_{p \iN P} \TVar(t|_p,\varphi)$.

\begin{definition}
\label{def:toy81LCTRS-1}
A constrained critical pair $s \approx t~\CO{\varphi}$ is
\emph{1-parallel closed} if $s \approx t~\CO{\varphi}
\sptoab[\geqslant 1]{} \cdot \sRab[\geqslant 2]{*} u \approx v~\CO{\psi}$
for some trivial $u \approx v~\CO{\psi}$. An \textup{LCTRS} is 1-parallel
closed if all its constrained critical pairs are 1-parallel closed. A
constrained parallel critical pair
$\ell\sigma[r_p\sigma]_{p \iN P} \approx r\sigma~\CO{\varphi}$
is \emph{2-parallel closed} if there exists a set of parallel positions
$Q$ such that
\begin{align*}
\ell\sigma[r_p\sigma]_{p \iN P} \approx r\sigma~\CO{\varphi}
\sptoab[\geqslant 2]{Q} \cdot \sRab[\geqslant 1]{*} u \approx v~\CO{\psi}
\end{align*}
for some trivial $u \approx v~\CO{\psi}$ and
$\TVar(v,\psi,Q) \subseteq \TVar(\ell\sigma,\varphi,P)$.
An \textup{LCTRS} is 2-parallel closed if all its constrained parallel
critical pairs are 2-parallel closed. An \textup{LCTRS} is parallel closed
if it is 1-parallel closed and 2-parallel closed.
\end{definition}

Recall from \Cref{sec:prelims} that our definition of $\sim$ differs from
the equivalence relation $\sim'$ defined in~\cite{KN13,SM23}.
The change is necessary for the variable condition of 2-parallel
closedness to make sense, as illustrated in the following example.

\begin{example}
\label{exa:var-cond-equivalence}
Consider the (LC)TRS consisting of the rules
\begin{align*}
\m{f}(\m{g}(x),y) &\to \m{f}(\m{b},y) &
\m{g}(x) &\to \m{a} &
\m{f}(\m{a},x) &\to x & \m{f}(\m{b},x) &\to x
\end{align*}
The peak
$\m{f}(\m{a},y)~\CO{\m{true}} \rptoa[\SET{1}]
\m{f}(\m{g}(x),y)~\CO{\m{true}} \to \m{f}(\m{b},y)~\CO{\m{true}}$
gives rise to the (constrained) parallel critical pair
$\m{f}(\m{a},y) \approx \m{f}(\m{b},y)~\CO{\m{true}}$.
Using $\sim'$ we have
\begin{align*}
\m{f}(\m{a},y) \approx \m{f}(\m{b},y)~\CO{\m{true}}
\pto_{\geqslant 2}^{\SET{\epsilon}} \cdot \Rab[\geqslant 1]{*}
y \approx y~\CO{\m{true}} \sim' x \approx x~\CO{\m{true}}
\end{align*}
and the variable condition $\TVar(x,\m{true},\SET{\epsilon}) \subseteq
\TVar(\m{f}(\m{g}(x),y),\m{true},\SET{1})$
holds.
Since the system has no logical constraints it can also be analyzed in
the TRS setting.
Following \Cref{def:toy81-1} we would have to check the variable condition
$\Var(y,\SET{\epsilon}) \subseteq \Var(\m{f}(\m{g}(x),y),\SET{1})$,
which does not hold.
Using $\sim$ resolves this difference, since
$y \approx y~\CO{\m{true}} \not\sim x \approx x~\CO{\m{true}}$.
So the conditions in \Cref{def:toy81LCTRS-1} reduce to the ones in
\Cref{def:toy81-1} for TRSs.
\end{example}

In \Cref{thm:CP} in \Cref{sec:transformation} we related critical
pairs of the transformed TRS to constrained critical pairs of the
originating LCTRS. The following theorem does the same for parallel
critical pairs. 

\begin{theorem}
\label{thm:PCP}
For every parallel critical pair $s \approx t$ of $\ov{\xR}$
there exists a constrained parallel critical pair
$\ccp{s'}{t'}{\varphi'}$ of $\xR$ and a substitution $\gamma$ such that
$s = s'\gamma$, $t = t'\gamma$ and $\gamma \vDash \varphi'$.
\end{theorem}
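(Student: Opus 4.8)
The plan is to mirror the proof of \Cref{thm:CP} almost verbatim, replacing the single overlap position by the parallel position set $P$ and the single inner rule by the family $\SET{\rho_p \mid p \in P}$. A parallel critical pair $s \approx t$ of $\ov{\xR}$ originates from a peak $\ell\mu\bar\sigma[r_p\nu_p\bar\sigma]_{p \in P} \rpto \ell\mu\bar\sigma \Rab[\xR]{\epsilon} r\mu\bar\sigma$, where the root rule $\ell\mu \R r\mu$ is the instance of some $\rho\colon \crr{\ell}{r}{\varphi} \in \xRrc$ under a substitution $\mu$ with $\Dom(\mu) = \LVar(\rho)$ and $\mu \vDash \varphi \land \EC_\rho$, each inner rule $\ell_p\nu_p \R r_p\nu_p$ is the instance of some $\rho_p\colon \crr{\ell_p}{r_p}{\varphi_p} \in \xRrc$ under $\nu_p$ with $\Dom(\nu_p) = \LVar(\rho_p)$ and $\nu_p \vDash \varphi_p \land \EC_{\rho_p}$, and $\bar\sigma$ is an mgu of $\SET{\ell_p\nu_p = \ell\mu|_p \mid p \in P}$. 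Here the $\rho$ and the $\rho_p$ may be taken pairwise variable-disjoint, matching condition~(1) of \Cref{def:cpcp}.

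First I would collect the logical-variable substitutions into a single $\tau$ combining $\mu$ and all the $\nu_p$; their domains are pairwise disjoint because the rules are variable-disjoint variants, so $\Dom(\tau) = \LVar(\rho) \cup \bigcup_{p \in P} \LVar(\rho_p)$, and, writing $\Phi_0 = \varphi \land \bigwedge_{p \in P} \varphi_p \land \psi$ with $\psi = \EC_\rho \land \bigwedge_{p \in P} \EC_{\rho_p}$, we get $\tau \vDash \Phi_0$ together with $\ell\tau = \ell\mu$, $r\tau = r\mu$, $\ell_p\tau = \ell_p\nu_p$ and $r_p\tau = r_p\nu_p$. The peak then reads $\ell\tau\bar\sigma[r_p\tau\bar\sigma]_{p \in P} \rpto \ell\tau\bar\sigma \Rab[\xR]{\epsilon} r\tau\bar\sigma$. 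Because each $\mu$- and $\nu_p$-image of a logical variable is a value while the inner redexes $\ell_p\nu_p$ have non-value roots, the positions in $P$ are already function positions of $\ell$, i.e. $P \subseteq \FPos(\ell)$, so $\ell\tau|_p = \ell|_p\tau$ and $\bar\sigma$ unifies $\ell_p\tau = \ell|_p\tau$ for every $p \in P$. Hence the system $\SET{\ell_p = \ell|_p \mid p \in P}$ is unifiable, yielding an mgu $\delta$ and a substitution $\gamma$ with $\delta\gamma = \tau\bar\sigma$. I then set $s' = \ell\delta[r_p\delta]_{p \in P}$, $t' = r\delta$ and $\varphi' = \Phi_0\delta$, claiming that $\rho$, $P$ and the $\rho_p$ (with $\delta$ playing the role of the mgu in \Cref{def:cpcp}) produce exactly this constrained parallel critical pair.

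The remaining work is verifying the four conditions of \Cref{def:cpcp}, each being the parallel analogue of a step already carried out for \Cref{thm:CP}. Condition~(1) is immediate. Condition~(2) requires $\delta(x) \in \Val \cup \xV$ for $x \in \LVar(\rho) \cup \bigcup_{p} \LVar(\rho_p)$, which follows from $\delta\gamma = \tau\bar\sigma$ and $\tau \vDash \Phi_0$ by the same argument on $\m{root}(\delta(x))$ used in \Cref{thm:CP}. Condition~(3), satisfiability of $\varphi\delta \land \bigwedge_{p} \varphi_p\delta$, follows since applying $\gamma$ gives $\Phi_0\delta\gamma = \Phi_0\tau\bar\sigma = \Phi_0\tau$, which is valid. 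Condition~(4), the case $P = \SET{\epsilon}$, is the exact counterpart of condition~(5) in \Cref{thm:CP}: assuming $\rho_\epsilon$ is a variant of $\rho$ \emph{and} $\Var(r) \subseteq \Var(\ell)$, the renaming argument shows that the instances $\ell\mu \R r\mu$ and $\ell_\epsilon\nu_\epsilon \R r_\epsilon\nu_\epsilon$ would be variants, contradicting the defining condition of the TRS parallel critical peak. Finally $s = s'\gamma$, $t = t'\gamma$ and $\gamma \vDash \varphi'$ drop out of $\delta\gamma = \tau\bar\sigma$ and $\tau \vDash \Phi_0$ precisely as in \Cref{thm:CP}.

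I expect the main obstacle to be bookkeeping rather than conceptual. Unlike \Cref{thm:CP}, which unifies a single equation, here one must unify the whole system $\SET{\ell_p = \ell|_p \mid p \in P}$ and keep the simultaneous replacement $[\cdot]_{p \in P}$ coherent under the four substitutions $\delta$, $\gamma$, $\tau$ and $\bar\sigma$. The key simplifying fact is that the positions in $P$ are pairwise parallel, so each simultaneous replacement commutes with substitution application position-by-position; consequently the identities $s = s'\gamma$ and $t = t'\gamma$ decompose into the single-position computations already present in \Cref{thm:CP}, and no genuinely new difficulty arises from passing to the parallel setting.
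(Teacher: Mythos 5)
Your proposal is correct and takes essentially the same route as the paper's own proof: it also merges $\mu$ and the $\nu_p$ into a single substitution $\tau$, factors $\tau\bar\sigma$ through an mgu $\delta$ of $\SET{\ell|_p = \ell_p \mid p \in P}$, sets $s' = \ell\delta[r_p\delta]_{p \in P}$, $t' = r\delta$, $\varphi' = \Phi_0\delta$, and verifies the conditions of \Cref{def:cpcp} by the same root argument (condition (2)), groundness argument (condition (3)), and renaming argument (condition (4)) used in \Cref{thm:CP}. The paper's appendix proof is precisely this plan written out in full, so no gap remains.
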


\begin{proof}
Let $s \approx t$ be a parallel critical pair of $\ov{\xR}$, originating
from the parallel critical peak $\ell\mu\sigma[r_p\nu_p\sigma]_{p \iN P}
\rpto \ell\mu\sigma = \ell\mu\sigma[\ell_p\nu_p\sigma]_{p \iN P}
\Rb[\epsilon]
r\mu\sigma$ with variants $\rho\colon \crr{\ell}{r}{\varphi}$ and
$\rho_p\colon \crr{\ell_p}{r_p}{\varphi_p}$ for $p \in P$ of rules in
$\xRrc$ without shared variables,
$\psi = \EC_\rho$ and
$\psi_p = \EC_{\rho_p}$ for $p \in P$.
Furthermore, $\Dom(\nu_p) = \LVar(\rho_p)$ for
$p \in P$, $\Dom(\mu) = \LVar(\rho)$,
$\nu_p \vDash \varphi_p \land \psi_p$ for $p \in P$,
$\mu \vDash \varphi \land \psi$, $p \in \FPos(\ell\mu)$, and $\sigma$ is
an mgu of
$\SET{\ell\mu|_p \approx \ell_p\nu_p \mid p \in P}$. Moreover, if
$P = \SET{\epsilon}$ then $\ell_\epsilon\nu_\epsilon \R
r_\epsilon\nu_\epsilon~\CO{\varphi_\epsilon\nu_\epsilon}$ and
$\ell\mu \R r\mu~\CO{\varphi\mu}$ are not variants. Define the
substitution $\tau$ as $\bigcup\,\SET{\nu_p \mid p \in P} \uplus \mu$.
Clearly, $\ell_p\tau = \ell_p\nu_p$ and $r_p\tau = r_p\nu_p$ for
$p \in P$, $\ell\tau = \ell\mu$, $r\tau = r\mu$,
$\tau \vDash \varphi \land \psi$ and
$\tau \vDash \varphi_p \land \psi_p$ for all $p \in P$.
Hence the given peak can be written as
$\ell\tau\sigma[r_p\tau\sigma]_{p \iN P} \rpto \ell\tau\sigma =
\ell\tau\sigma[\ell_p\tau\sigma]_{p \iN P} \Rb[\epsilon] r\tau\sigma$ with
$\tau \vDash \varphi''$ where
\[
\varphi'' = \varphi \land \EC_\rho \land
\bigwedge_{p \iN P} (\varphi_p \land \EC_{\rho_p})
\]
Since $\ell|_p\tau\sigma = \ell_p\tau\sigma$ for all $p \in P$ there
exists an mgu $\delta$ of $\SET{\ell|_p = \ell_p \mid p \in P}$ and a
substitution $\gamma$ such that $\delta\gamma = \tau\sigma$. Let
$s' = \ell\delta[r_p\delta]_{p \iN P}$ and $t' = r\delta$. We claim that
this results in the constrained parallel critical pair
$\ccp{s'}{t'}{\varphi''\delta}$. Condition (1) of \Cref{def:cpcp} is
trivially satisfied. We obtain $P \subseteq \FPos(\ell)$
because $P \subseteq \FPos(\ell\mu)$,
$\mu(x) \in \Val$ for every $x \in \Dom(\mu)$, and
$\m{root}(\ell\mu|_p) = \m{root}(\ell_p\nu) \in \xF \setminus \Val$ for
all $p \in P$. For condition (2) it
remains to show that $\delta(x) \in \Val \cup \xV$ for all
$x \in \LVar(\rho) \cup \bigcup_{p \iN P} \LVar(\rho_p)$. Suppose to the
contrary that $\m{root}(\delta(x)) \in \xF \setminus \Val$
for some $x \in \LVar(\rho) \union \bigcup_{p \iN P} \LVar(\rho_p)$. Then
$\m{root}(\delta(x)) = \m{root}(\gamma(\delta(x))) =
\m{root}(\sigma(\tau(x))) \in \xF \setminus \Val$, which contradicts
$\tau \vDash \varphi''$. Condition (3) follows from the identity
$\delta\gamma = \tau\sigma$ together with $\tau \vDash \varphi''$ which
imply $\delta\gamma \vDash \varphi''$ and thus $\varphi''\delta$ is
satisfiable. Hence also
$\varphi\delta \land \bigwedge_{p \iN P} \varphi_p\delta$ is satisfiable.
It remains to show condition (4), so let $P = \SET{\epsilon}$ and further
assume that $\rho_\epsilon$ and $\rho$ are variants. So there exists a
variable renaming $\pi$ such that $\rho_\epsilon\pi = \rho$. In
particular, $\ell_\epsilon\pi = \ell$ and $r_\epsilon\pi = r$. We show
$\tau(\pi(x)) = \pi(\tau(x))$ for all $x \in \Var(\ell_\epsilon)$. Let
$x \in \Var(\ell_\epsilon)$. If $x \in \LVar(\rho_\epsilon) = \Dom(\nu)$
then $\tau(x) = \nu(x) \in \Val$. Moreover,
$\pi(x) \in \LVar(\rho) = \Dom(\mu)$ and thus
$\tau(\pi(x)) = \mu(\pi(x)) \in \Val$. Since $\ell_\epsilon\tau$ and
$\ell\tau$ are unifiable, $\pi(\tau(x)) = \tau(x) = \tau(\pi(x))$.
If $x \notin \LVar(\rho_\epsilon)$ then $\tau(x) = x$,
$\pi(x) \notin \LVar(\rho)$ and similarly
$\tau(\pi(x)) = \pi(x) = \pi(\tau(x))$.
All in all, $\ell_\epsilon\tau\pi = \ell_\epsilon\pi\tau = \ell\tau$.
Now, if $\Var(r_\epsilon) \subseteq \Var(\ell_\epsilon)$ then we obtain
$r_\epsilon\tau\pi = r_\epsilon\pi\tau = r\tau$, contradicting the fact
that $\ell_\epsilon\nu \R r_\epsilon\nu$ and $\ell\mu \R r\mu$ are not
variants.  We conclude that $\ccp{s'}{t'}{\varphi''\delta}$ is a
constrained parallel critical pair of $\xR$. So we can take
$\varphi' = \varphi''\delta$. Clearly,
$s = s'\gamma$ and $t = t'\gamma$. Moreover, $\gamma \vDash \varphi'$
since $\varphi'\gamma = \varphi''\tau\sigma = \varphi''\tau$ and
$\tau \vDash \varphi''$.
\qed
\end{proof}

\begin{lemma}
\label{lem:pto to unconstrained}
If $s~\CO{\varphi} \pto^P t~\CO{\varphi}$ then $s\delta \pto^P t\delta$
for all substitutions $\delta$ with $\delta \vDash \varphi$.
\end{lemma}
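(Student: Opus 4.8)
The plan is to mirror the proof of \Cref{lem:mleadsto to unconstrained}, proceeding by induction on the derivation of $s~\CO{\varphi} \pto^P t~\CO{\varphi}$ according to the three cases of the constrained version of \Cref{def:parallel rewriting}, while tracking the position set $P$ that records where case~3 is applied.

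In the base case (case~1) we have $s = t = x$ for a variable $x$ and $P = \varnothing$, so $x\delta \pto^\varnothing x\delta$ holds by reflexivity of $\pto$ on the term $x\delta$ (repeated application of cases~1 and~2). For case~2, where $s = f(\seq{s})$, $t = f(\seq{t})$ with $s_i~\CO{\varphi} \pto^{P_i} t_i~\CO{\varphi}$ and $P$ the union of the $P_i$ each prefixed by $i$, the induction hypothesis gives $s_i\delta \pto^{P_i} t_i\delta$ for every $i$, and reassembling via case~2 yields $f(\seq{s})\delta \pto^P f(\seq{t})\delta$; the contracted positions remain parallel since they live in distinct argument subtrees.

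The only substantive case is case~3, where $s = \ell\sigma$, $t = r\sigma$ for a rule $\rho\colon \crr{\ell}{r}{\psi} \in \xRrc$ with $\sigma(x) \in \Val \cup \Var(\varphi)$ for all $x \in \LVar(\rho)$, $\varphi$ satisfiable, $\varphi \Rightarrow \psi\sigma$ valid, and $P = \SET{\epsilon}$. Here $s\delta = \ell(\sigma\delta)$ and $t\delta = r(\sigma\delta)$, so it suffices to show that the \emph{unconstrained} step $\ell(\sigma\delta) \pto r(\sigma\delta)$ is licensed by case~3 of the term-level \Cref{def:parallel rewriting}, i.e.\ that $\sigma\delta \vDash \rho$. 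After the usual harmless assumption that $\Var(\rho)$ is disjoint from the other variables and restricting $\sigma$ to $\Var(\rho)$, the two conditions to check are that $(\sigma\delta)(x) \in \Val$ for every $x \in \LVar(\rho)$ and that $\psi\sigma\delta$ is valid. For the first, if $\sigma(x) \in \Val$ then $(\sigma\delta)(x) = \sigma(x) \in \Val$ since values are constants, while if $\sigma(x) \in \Var(\varphi)$ then $(\sigma\delta)(x) = \delta(\sigma(x)) \in \Val$ because $\delta \vDash \varphi$ forces $\delta$ to map every variable of $\varphi$ to a value. For the second, $\delta \vDash \varphi$ together with the validity of $\varphi \Rightarrow \psi\sigma$ gives validity of $\psi\sigma\delta = \psi(\sigma\delta)$. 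Hence case~3 applies and $s\delta \pto^{\SET{\epsilon}} t\delta$.

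I expect case~3 to be the only place requiring genuine attention, specifically the verification that $\sigma\delta$ respects $\rho$, and this is precisely the argument already carried out for the multi-step analogue in \Cref{lem:mleadsto to unconstrained}. The parallel setting is in fact slightly simpler, because the right-hand side uses the same substitution $\sigma$ rather than a multi-step-related $\tau$, so no nested induction on the substitution is needed and the closure follows directly.
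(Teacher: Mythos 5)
Your proposal is correct and takes essentially the same approach as the paper: induction on the derivation of the constrained parallel step, with cases 1 and 2 handled trivially/by the induction hypothesis, and case 3 resolved by showing that $\sigma\delta$ respects the rule $\rho$ using $\delta \vDash \varphi$ and the validity of $\varphi \Rightarrow \psi\sigma$. If anything, you are slightly more explicit than the paper in verifying $(\sigma\delta)(x) \in \Val$ for all $x \in \LVar(\rho)$, which the paper compresses into the single inference $\sigma\delta \vDash \psi$.
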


\begin{proof}
We proceed by induction on $\pto$. In case (1) we have
$x~\CO{\varphi} \pto^\varnothing x~\CO{\varphi}$, and
$x\delta \pto^\varnothing x\delta$ holds trivially. In case (2) we
have $s = f(\seq{s})$, $t = f(\seq{t})$,
$s_i~\CO{\varphi} \pto^{P_i} t_i~\CO{\varphi}$ for
$1 \leqslant i \leqslant n$
and $P = \SET{ip \mid \text{$1 \leqslant i \leqslant n$ and $p \in P_i$}}$.
From the induction hypothesis we obtain $s_i\delta \pto^{P_i} t_i\delta$
for all $1 \leqslant i \leqslant n$ and thus also
$s\delta \pto^P t\delta$. In case (3) we have
$s \pto^{\SET{\epsilon}} t$
with $s = \ell\sigma$ and $t = r\sigma$ for some rule
$\rho\colon \crr{\ell}{r}{\psi} \in \xRrc$ and substitution $\sigma$
such that $\sigma(x) \in \Val \cup \Var(\varphi)$ for all
$x \in \LVar(\rho)$, $\varphi$ is satisfiable and
$\varphi \Rightarrow \psi\sigma$ is valid.
From $\LVar(\rho) \subseteq \Dom(\sigma)$ and $s = \ell\sigma$
we infer
$\Var(\rho) \subseteq \Dom(\sigma)$. Without loss of generality we assume
$(\Var(s) \cup \Var(t) \cup \Var(\varphi)) \cap \Var(\rho) = \varnothing$
and restrict the domain of $\sigma$ to $\Var(\rho)$. Moreover, from
$\delta \vDash \varphi$ we infer
$\delta \vDash \psi\sigma$ and thus also $\sigma\delta \vDash \psi$.
Therefore
$s\delta = \ell\sigma\delta \pto^{\SET{\epsilon}} r\sigma\delta = t\delta$
as desired.
\qed
\end{proof}

The following lemma is proved by replaying
the proof of~\Cref{lem:keymto}, using
\Cref{def:parallel rewriting} and \Cref{lem:pto to unconstrained} with the
respective adaptions to the parallel rewrite relation.

\begin{lemma}
\label{lem:keypto}
If $s\approx t~\CO{\varphi} \sptoab[\geqslant 1]{P} u \approx v~\CO{\psi}$
then for all substitutions $\sigma \vDash \varphi$ with
$\Dom(\sigma) = \Var(\varphi)$ there exists a $\delta \vDash \psi$
with $\Dom(\delta) = \Var(\psi)$ such that $s\sigma \pto^P u\delta$ and
$t\sigma = v\delta$.
\qed
\end{lemma}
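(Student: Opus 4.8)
The plan is to replay the proof of \Cref{lem:keymto} almost verbatim, replacing the multi-step relation by the parallel rewrite relation throughout and carrying the position set $P$ along. First I would unfold $\spto = {\sim} \cdot {\pto} \cdot {\sim}$ to factor the given step as $s \approx t~\CO{\varphi} \sim s' \approx t'~\CO{\varphi'} \pto^P u' \approx v'~\CO{\varphi'} \sim u \approx v~\CO{\psi}$, where the middle $\pto^P$ step uses exactly the positions in $P$, all of which lie below position $1$, i.e.\ in the first argument of $\approx$.

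Given $\sigma \vDash \varphi$ with $\Dom(\sigma) = \Var(\varphi)$, I would invoke the definition of $\sim$ on the first equivalence to obtain a substitution $\tau$ with $\tau \vDash \varphi'$, $\Dom(\tau) = \Var(\varphi')$, $s\sigma = s'\tau$ and $t\sigma = t'\tau$. Since every position in $P$ is below position $1$, the middle step is an instance of case~2 of \Cref{def:parallel rewriting} applied to the first argument, so $t' = v'$ and $s'~\CO{\varphi'} \pto^P u'~\CO{\varphi'}$. Applying \Cref{lem:pto to unconstrained} with $\tau$, which respects $\varphi'$, yields $s'\tau \pto^P u'\tau$ while preserving the position set. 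Finally, the second equivalence together with $\tau \vDash \varphi'$ and $\Dom(\tau) = \Var(\varphi')$ supplies a substitution $\delta$ with $\delta \vDash \psi$, $\Dom(\delta) = \Var(\psi)$, $u'\tau = u\delta$ and $v'\tau = v\delta$. Chaining the equalities then gives $s\sigma = s'\tau \pto^P u'\tau = u\delta$ and $t\sigma = t'\tau = v'\tau = v\delta$, as required.

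The only point needing genuine care is the bookkeeping of the position set $P$: I must check that \Cref{lem:pto to unconstrained} transfers $s'~\CO{\varphi'} \pto^P u'~\CO{\varphi'}$ to $s'\tau \pto^P u'\tau$ with the \emph{same} $P$, and that the two equivalence steps—which only instantiate the logical variables and match instances at the term level—leave the positions of the parallel contraction in the first argument untouched. Everything else is a direct transcription of the multi-step argument, since $\pto$ and $\MS$ share the same inductive shape in \Cref{def:parallel rewriting} and \Cref{def:multi-step rewriting constraints}, and the reduction to case~2 on the first argument relies in both proofs only on the redexes being positioned below~$1$.
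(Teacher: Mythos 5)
Your proof is correct and is essentially the paper's own: the paper proves this lemma precisely by replaying the proof of \Cref{lem:keymto}, substituting \Cref{def:parallel rewriting} and \Cref{lem:pto to unconstrained} for their multi-step counterparts, which is exactly the transcription you carry out (unfold $\sim \cdot \pto \cdot \sim$, transfer $\sigma$ through the first equivalence, reduce to case~2 on the first argument of $\approx$, apply \Cref{lem:pto to unconstrained}, and close with the second equivalence). Your bookkeeping concern about the position set $P$ is legitimate but resolved as you suggest, since \Cref{lem:pto to unconstrained} preserves $P$ under instantiation and the equivalence steps only instantiate logical variables.
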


\begin{lemma}
\label{lem:key3}
If $s \approx t~\CO{\varphi} \sptoab[\geqslant 1]{P} u \approx v~\CO{\psi}$
then for all substitutions $\sigma \vDash \varphi$ there exists a
substitution $\delta$ such that
$\delta \vDash \psi$, $s\sigma \pto^P u\delta$ and
$t\sigma = v\delta$.
\end{lemma}

\begin{proof}
This follows from \Cref{lem:keypto}, similar to the proof of
\Cref{lem:key1}. \qed
\end{proof}

\begin{lemma}
\label{lem:1pc-unconstrained}
If a constrained critical pair $s \approx t~\CO{\varphi}$ is
1-parallel closed then
$s\sigma \pto \cdot \mr{\prescript{*}{}{\L}} t\sigma$ for all
substitutions $\sigma$ with $\sigma \vDash \varphi$.
\end{lemma}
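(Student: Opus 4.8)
The plan is to mirror the structure of the proof of \Cref{lem:almost development closed}, but for the 1-parallel closed case.  We are given a 1-parallel closed constrained critical pair $s \approx t~\CO{\varphi}$ and a substitution $\sigma$ with $\sigma \vDash \varphi$, and we must produce $s\sigma \pto \cdot \mr{\prescript{*}{}{\L}} t\sigma$.  By \Cref{def:toy81LCTRS-1}, 1-parallel closedness unfolds as
\begin{align*}
s \approx t~\CO{\varphi} \sptoab[\geqslant 1]{} u' \approx v'~\CO{\psi'}
\sRab[\geqslant 2]{*} u \approx v~\CO{\psi}
\end{align*}
for some intermediate $u' \approx v'~\CO{\psi'}$ and trivial $u \approx v~\CO{\psi}$.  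So the two main ingredients are already available as lemmata: apply \Cref{cor:keypto} (equivalently \Cref{lem:key3}) to the leading parallel multi-step, and \Cref{lem:key1} to the trailing rewrite sequence below position $2$.

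\textbf{First}, I would apply \Cref{cor:keypto} to the step
$s \approx t~\CO{\varphi} \sptoab[\geqslant 1]{} u' \approx v'~\CO{\psi'}$.
Instantiating at the given $\sigma \vDash \varphi$, this yields a substitution $\delta \vDash \psi'$ with
$s\sigma \pto u'\delta$ and $t\sigma = v'\delta$.
\textbf{Next}, I would apply \Cref{lem:key1} to the remaining part
$u' \approx v'~\CO{\psi'} \sRab[\geqslant 2]{*} u \approx v~\CO{\psi}$,
using the substitution $\delta$ from the previous step as the $\gamma$ of that lemma (note $\delta \vDash \psi'$ is exactly the hypothesis required).  Since the position prefix is $2$, the ``$p = 2q$'' branch of \Cref{lem:key1} applies, giving a substitution $\eta \vDash \psi$ with
$u'\delta = u\eta$ and $v'\delta \sRab[\geqslant q]{*} v\eta$; in particular $v'\delta \Ra[*] v\eta$ after projecting the constrained rewrite sequence onto its underlying term sequence.

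\textbf{Finally}, I would assemble these using triviality of $u \approx v~\CO{\psi}$: because $\eta \vDash \psi$, we have $u\eta = v\eta$.  Chaining everything:
\begin{align*}
s\sigma \pto u'\delta = u\eta = v\eta \mr{\prescript{*}{}{\from}} v'\delta = t\sigma,
\end{align*}
which is exactly $s\sigma \pto \cdot \mr{\prescript{*}{}{\L}} t\sigma$.  I expect the argument to be entirely mechanical once the two prior lemmata are in hand; \textbf{the only subtle point} is ensuring the substitution $\delta$ produced by \Cref{cor:keypto} is precisely the one fed into \Cref{lem:key1}, so that the witnessing terms $u'\delta$ and $v'\delta$ match on both sides of the composition.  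Beyond that, no genuine obstacle arises, since the heavy lifting—relating constrained multi-steps and rewrite sequences to their unconstrained counterparts under a respecting substitution—has been isolated into \Cref{cor:keypto} and \Cref{lem:key1}.
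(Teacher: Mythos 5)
Your proposal is correct and takes essentially the same route as the paper's own proof: unfold the definition of 1-parallel closedness, apply the parallel-step lemma (\Cref{lem:key3}) to the leading $\sptoab[\geqslant 1]{}$ step at the given $\sigma$, feed the resulting substitution into the $p = 2q$ branch of \Cref{lem:key1} for the trailing $\sRab[\geqslant 2]{*}$ sequence, and close the diagram using triviality of $u \approx v~\CO{\psi}$. The only (immaterial) difference is that you cite the unrestricted \Cref{lem:key3}, whereas the paper invokes its domain-restricted appendix variant; your choice is in fact the cleaner one, since the given $\sigma \vDash \varphi$ carries no domain restriction.
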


\begin{proof}
Let $s \approx t~\CO{\varphi}$ be a 1-parallel closed constrained critical
pair. So
\begin{align}
\label{j1}
s \approx t~\CO{\varphi} \sptoab[\geqslant 1]{}
s' \approx t'~\CO{\varphi'} \sRab[\geqslant 2]{*} u \approx v~\CO{\psi}
\end{align}
with trivial $u \approx v~\CO{\psi}$. Let $\sigma \vDash \varphi$. To the
first part of \eqref{j1} we apply \Cref{lem:keypto}. This yields a
substitution $\delta \vDash \varphi'$ such that $s\sigma \pto^P s'\delta$
and $t\sigma = t'\delta$. According to \Cref{lem:key1}
the second part of \eqref{j1} yields $s'\delta = u\gamma$ and
$t'\delta \Ra[*] v\gamma$ for some substitution $\gamma$ with
$\gamma \vDash \psi$. We obtain $u\gamma = v\gamma$ from the triviality of
$u \approx v~\CO{\psi}$. Combining the above facts yields the desired
$s\sigma \pto^P u\gamma = v\gamma \La[*] t\sigma$.
\qed
\end{proof}

\begin{lemma}
\label{lem:2pc-unconstrained}
If a constrained parallel critical pair
$s = \ell\sigma'[r_p\sigma']_{p \iN P} \approx r\sigma' = t~\CO{\varphi}$
is 2-parallel closed then there exist a term $v$ and a set $Q$ of
parallel positions such that $s\sigma \Ra[*] v \rptoa[Q] t\sigma$
and $\Var(v,Q) \subseteq \Var(\ell\sigma'\sigma,P)$
for all substitutions $\sigma$ with $\sigma \vDash \varphi$.
\end{lemma}

\begin{proof}
Let
$s = \ell\sigma'[r_p\sigma']_{p \iN P} \approx r\sigma' = t~\CO{\varphi}$
be a 2-parallel closed constrained critical pair. So
\begin{align}
\label{j2}
s \approx t~\CO{\varphi} \sptoab[\geqslant 2]{Q}
s' \approx t'~\CO{\varphi'} \sRab[\geqslant 1]{*}
u' \approx v'~\CO{\psi}
\end{align}
with trivial $u' \approx v'~\CO{\psi}$ and
$\TVar(v',\psi,Q) \subseteq \TVar(\ell\sigma',\varphi,P)$. Consider a
substitution $\sigma$ and let $\sigma \vDash \varphi$. We split $\sigma$
into $\sigma = \sigma_{\mth} \cup \sigma_{\mte} =
\sigma_{\mth}\sigma_{\mte}$ such that
$\Dom(\sigma_{\mth}) \cap \Dom(\sigma_{\mte}) = \varnothing$ and
$\Dom(\sigma_{\mth}) = \Var(\varphi)$.
From
$\sigma \vDash \varphi$ we obtain $\sigma_{\mth} \vDash \varphi$.
To the first part of \eqref{j2} we apply \Cref{lem:keypto}.
This yields a substitution $\delta \vDash \varphi'$ with
$\Dom(\delta) = \Var(\varphi')$ such that $s\sigma_{\mth} = s'\delta$ and
$t\sigma_{\mth} \pto^Q t'\delta$.
Applying \Cref{lem:repeated-lem-key} to the second part of \eqref{j2}
yields
$s'\delta \Ra[*] u'\gamma$ and $t'\delta = v'\gamma$ for some substitution
$\gamma$ with $\gamma \vDash \psi$ and $\Dom(\gamma) = \Var(\psi)$. We
obtain $u'\gamma = v'\gamma$ from the triviality of
$u' \approx v'~\CO{\psi}$. Combining the above facts yields
$s\sigma = s\sigma_{\mth}\sigma_{\mte} \Ra[*]
u'\gamma\sigma_{\mte} = v'\gamma\sigma_{\mte} \rptoa[Q]
t\sigma_{\mth}\sigma_{\mte} = t\sigma$ by closure
of rewriting under substitution.
It remains to show the variable condition
$\Var(v'\gamma\sigma_{\mte},Q) \subseteq \Var(\ell\sigma'\sigma,P)$.
Clearly
\begin{align*}
\TVar(\ell\sigma',\varphi,P)
&= \bigcup_{p \iN P} \TVar(\ell\sigma'|_p,\varphi)
= \bigcup_{p \iN P} \Var(\ell\sigma'|_p) \setminus \Var(\varphi) \\
&= \bigcup_{p \iN P} \Var(\ell\sigma'\sigma_{\mth}|_p)
= \Var(\ell\sigma'\sigma_{\mth},P)
\end{align*}
and similarly $\TVar(v',\psi,Q) = \Var(v'\gamma,Q)$. Hence
$\Var(v'\gamma,Q) \subseteq \Var(\ell\sigma'\sigma_{\mth},P)$
follows from
$\TVar(v',\psi,Q) \subseteq \TVar(\ell\sigma',\varphi,P)$.
Therefore,
\begin{align*}
\Var(v'\gamma\sigma_{\mte},Q) \subseteq
\Var(\ell\sigma'\sigma_{\mth}\sigma_{\mte},P) =
\Var(\ell\sigma'\sigma,P)
\end{align*}
as desired.
\qed
\end{proof}

\begin{theorem}
\label{thm:toy81LCTRStoy81}
If an \textup{LCTRS} $\xR$ is parallel closed then
$\ov{\xR}$ is parallel closed.
\end{theorem}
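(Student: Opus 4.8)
The plan is to unfold the definition of parallel closedness (\Cref{def:toy81-1}) for $\ov{\xR}$ into its two constituents and establish each separately, mirroring the structure of the proof of \Cref{thm:almost development closedness}. Concretely, I would show that $\ov{\xR}$ is both 1-parallel closed and 2-parallel closed, using \Cref{thm:CP} for the former and \Cref{thm:PCP} for the latter to pull each (parallel) critical pair of $\ov{\xR}$ back to a corresponding constrained one of $\xR$, and then transporting the closing sequences forward along the instantiating substitution via \Cref{lem:1pc-unconstrained} and \Cref{lem:2pc-unconstrained}.

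For 1-parallel closedness, take an arbitrary critical pair $s \approx t$ of $\ov{\xR}$. By \Cref{thm:CP} there is a constrained critical pair $s' \approx t'~\CO{\varphi}$ of $\xR$ together with a substitution $\gamma \vDash \varphi$ such that $s = s'\gamma$ and $t = t'\gamma$. Since $\xR$ is parallel closed, $s' \approx t'~\CO{\varphi}$ is 1-parallel closed, so \Cref{lem:1pc-unconstrained} applied to $\gamma$ yields $s = s'\gamma \pto \cdot \mr{\prescript{*}{}{\L}} t'\gamma = t$, which is exactly 1-parallel closedness of $s \approx t$ in the TRS sense.

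For 2-parallel closedness, take an arbitrary parallel critical pair $s \approx t$ of $\ov{\xR}$, originating from a peak with source term $L$ and set of parallel positions $P$. By \Cref{thm:PCP} there is a constrained parallel critical pair $s' \approx t'~\CO{\varphi'}$ of $\xR$ and a substitution $\gamma \vDash \varphi'$ with $s = s'\gamma$ and $t = t'\gamma$; this constrained pair is 2-parallel closed because $\xR$ is. Instantiating \Cref{lem:2pc-unconstrained} at $\sigma = \gamma$ produces a term $v$ and a set $Q$ of parallel positions with $s = s'\gamma \Ra[*] v \rptoa[Q] t'\gamma = t$ and $\Var(v,Q) \subseteq \Var(\ell\sigma'\gamma,P)$, where $\ell\sigma'$ is the source of the constrained parallel critical peak. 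It then remains to identify $\Var(\ell\sigma'\gamma,P)$ with $\Var(L,P)$ so that the variable condition of \Cref{def:toy81-1} is met.

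This last identification is the step I expect to be the main obstacle. It requires that the correspondence of \Cref{thm:PCP} aligns not merely the two components of the critical pair but also the source terms and the positions $P$, so that $L = \ell\sigma'\gamma$, or at least that their variable sets at $P$ coincide. The point is that instantiating by $\gamma \vDash \varphi'$ collapses every logical variable to a value and hence contributes no variables, so the non-logical variable bookkeeping of $\TVar$ underlying \Cref{lem:2pc-unconstrained} matches the plain $\Var$ bookkeeping required in the TRS. This is precisely where the switch from $\sim'$ to $\sim$ pays off, as illustrated in \Cref{exa:var-cond-equivalence}: under $\sim$ the non-logical variables are tracked faithfully through the closing sequence, so the variable condition transfers verbatim, whereas under $\sim'$ it would not. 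Once this alignment is confirmed, $\Var(v,Q) \subseteq \Var(L,P)$ follows and $s \approx t$ is 2-parallel closed, completing the proof.
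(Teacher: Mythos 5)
Your proposal is correct and takes essentially the same route as the paper's proof: \Cref{thm:CP} combined with \Cref{lem:1pc-unconstrained} for 1-parallel closedness, and \Cref{thm:PCP} combined with \Cref{lem:2pc-unconstrained} for 2-parallel closedness. The source alignment $L = \ell\sigma'\gamma$ that you flag as the main obstacle is indeed needed, and the paper relies on it just as tacitly; it holds because the constrained parallel critical pair in the proof of \Cref{thm:PCP} is constructed from the very same peak (with the same set $P$ of positions), so the source of the TRS peak is exactly the $\gamma$-instance of the source of the constrained peak.
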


\begin{proof}
Let $\xR$ be a parallel closed LCTRS. First consider an arbitrary
critical pair $s \approx t \in \CP(\ov{\xR})$. From \Cref{thm:CP} we know
that there exist a constrained critical pair
$s' \approx t'~\CO{\varphi} \in \CCP(\xR)$ and a substitution $\sigma$
such that $s'\sigma = s$, $t'\sigma = t$ and $\sigma \vDash \varphi$.
Since the constrained critical pair is 1-parallel closed,
\Cref{lem:1pc-unconstrained} yields $s \pto \cdot \La[*] t$. Hence
$\ov{\xR}$ is 1-parallel closed.

Next consider an arbitrary parallel critical pair
$s \approx t \in \PCP(\ov{\xR})$.
\Cref{thm:PCP} yields a constrained parallel critical pair
$s' = \ell\sigma'[r_p\sigma']_{p \iN P} \approx r\sigma' = t'~\CO{\varphi}$
in $\CPCP(\xR)$ and a substitution $\sigma$ such that $s'\sigma = s$,
$t'\sigma = t$ and $\sigma \vDash \varphi$.
Since the constrained parallel critical pair is 2-parallel closed, by
\Cref{lem:2pc-unconstrained} there exist a term $v$ and a set of
parallel positions $Q$ such that $s \Ra[*] v \rptoa[Q] t$ and
$\Var(v,Q) \subseteq \Var(\ell\sigma'\sigma,P)$. Hence $\ov{\xR}$ is
2-parallel closed.
\qed
\end{proof}

Since left-linearity of $\xR$ is preserved in $\ov{\xR}$ and left-linear,
parallel closed TRSs are confluent by \Cref{thm:toy81}, we obtain the
following corollary via~\Cref{thm:PCP,thm:toy81LCTRStoy81}.
Again, $\xR$ only has to be left-linear in the variables
$x \notin \LVar$, since that is sufficient for $\ov{\xR}$ to be
left-linear.

\begin{corollary}
\label{cor:parallel-closed}
Every left-linear parallel closed \textup{LCTRS} is confluent.
\qed
\end{corollary}

We illustrate the corollary on a concrete example.

\begin{example}
\label{exa:toy81}
Consider the LCTRS $\xR$ over the theory \textsf{Ints} with the rules
\begin{align*}
\m{f}(\m{a}) &\R \m{g}(\m{4},\m{4}) &
\m{a} &\R \m{g}(\m{1} + \m{1},\m{3} + \m{1}) &
\m{g}(x,y) &\R \m{f}(\m{g}(z,y))~\CO{z = x - \m{2}}
\end{align*}
The constrained (parallel) critical pair
$\m{f}(\m{g}(\m{1} + \m{1},\m{3} + \m{1})) \approx
\m{g}(\m{4},\m{4})~\CO{\m{true}}$
originating from the peak
$\m{f}(\m{g}(\m{1} + \m{1},\m{3} + \m{1}))~\CO{\m{true}}
\rptoa[\{1\}] \m{f}(\m{a})~\CO{\m{true}}
\Rb[\epsilon] \m{g}(\m{4},\m{4})~\CO{\m{true}}$
is 2-parallel closed:
\begin{align*}
\m{f}(\m{g}(\m{1} + \m{1},\m{3} + \m{1})) \approx
\m{g}(\m{4},\m{4})~\CO{\m{true}}
&\sptoab[\geqslant 1]{\phantom{\{1\}}}
\m{f}(\m{g}(\m{2},\m{4})) \approx \m{g}(\m{4},\m{4})~\CO{\m{true}} \\
&\sptoab[\geqslant 2]{\{2\}}
\m{f}(\m{g}(\m{2},\m{4})) \approx \m{f}(\m{g}(\m{2},\m{4}))~\CO{\m{true}}
\end{align*}
Note that the condition
$\TVar(\m{f}(\m{g}(\m{2},\m{4})),\m{true},\{2\}) \subseteq
\TVar(\m{f}(\m{a}),\m{true},\{1\})$ is trivially satisfied. 
One easily checks that the corresponding constrained
critical pair is 1-parallel closed. Since the only other remaining
constrained critical pair is trivial, we conclude confluence by
\Cref{cor:parallel-closed}.
\end{example}

\section{Conclusion}

We presented a left-linearity preserving transformation from LCTRSs to
TRSs such that (parallel) critical pairs in the latter correspond to
constrained (parallel) critical pairs in the former. As a consequence,
confluence results for TRSs based on
restricted joinability conditions easily carry over to LCTRSs.
This was illustrated by generalizing the advanced confluence
results of van Oostrom~\cite{vO97} and Toyama~\cite{T81} from TRSs to
LCTRSs. We also proved that (local) confluence of terminating LCTRSs
over a decidable theory is undecidable in general.

\Cref{fig:venn-diagram} relates the confluence criteria in this
paper to the earlier ones from \cite{KN13,SM23}. The acronyms stand for
weak orthogonality (\textsf{WO}, \cite[Theorem~4]{KN13}),
strong closedness (\textsf{SC}, \cite[Theorem~2]{SM23}),
almost parallel closedness (\textsf{APC}, \cite[Theorem~4]{SM23}),
almost development closedness (\textsf{ADC},
\Cref{cor:almost-dev-closed}), and parallel closedness of (parallel)
critical pairs (\textsf{PCP}, \Cref{cor:parallel-closed}). All areas are
inhabited and the numbers refer to examples in this paper.

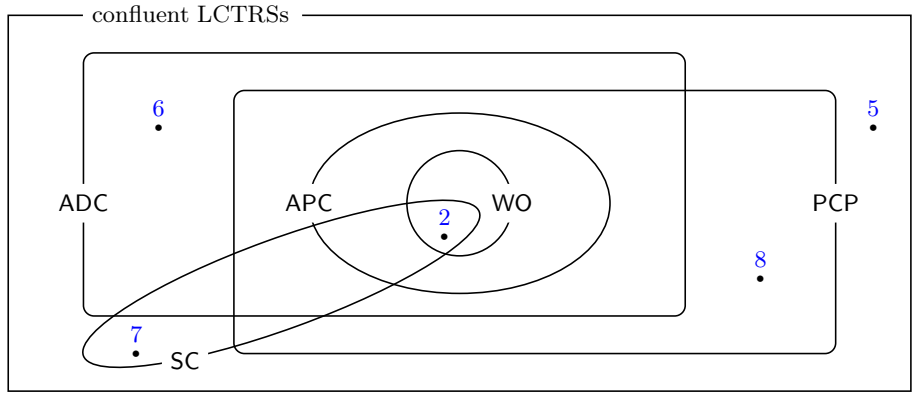
\begin{figure}
\centering
\begin{tikzpicture}
\draw[semithick] (0,0) rectangle (12cm,5cm);
\draw[fill=white,draw=white] (1cm,48mm) node[anchor=south west]
 {confluent LCTRSs} rectangle ++(29mm,5mm);
\draw[semithick] (60mm,25mm) ellipse (7mm and 7mm);
\draw[fill=white,draw=white] (65.5mm,22.5mm) rectangle ++(5mm,5mm);
\draw (67mm,25mm) node {\textsf{WO}};
\draw[semithick] (60mm,25mm) ellipse (20mm and 12mm);
\draw[fill=white,draw=white] (38.5mm,22.5mm) rectangle ++(5mm,5mm);
\draw (40mm,25mm) node {\textsf{APC}};
\draw[rounded corners,semithick] (10mm,10mm) rectangle ++(80mm,35mm);
\draw[fill=white,draw=white] (8.5mm,22.5mm) rectangle ++(5mm,5mm);
\draw (10mm,25mm) node {\textsf{ADC}};
\draw[rounded corners,semithick] (30mm,5mm) rectangle ++(80mm,35mm);
\draw[fill=white,draw=white] (108.5mm,22.5mm) rectangle ++(5mm,5mm);
\draw (110mm,25mm) node {\textsf{PCP}};
\draw[semithick,rotate=20] (39mm,1mm) ellipse (28mm and 6mm);
\draw[fill=white,draw=white] (20.5mm,2mm) rectangle ++(6mm,5mm);
\draw (23.5mm,4mm) node {\textsf{SC}};
\node (e7) at (10,1.5) {\tiny $\bullet$};
\node[yshift=1mm] at (e7.north) {\ref{exa:toy81}};
\node (e5) at (2,3.5) {\tiny $\bullet$};
\node[yshift=1mm] at (e5.north) {\ref{exa:almost-dev-closed}};
\node (e2) at (5.8,2.05) {\tiny $\bullet$};
\node[yshift=1mm] at (e2.north) {\ref{exa:weakly-orthogonal}};
\node (e6) at (1.7,0.5) {\tiny $\bullet$};
\node[yshift=1mm] at (e6.north) {\ref{exa:var-cond-equivalence}};
\node (e4) at (11.5,3.5) {\tiny $\bullet$};
\node[yshift=1mm] at (e4.north) {\ref{exa:trs-adc-notin-lctrs}};
\end{tikzpicture}
\caption{Relating confluence criteria for LCTRSs.}
\label{fig:venn-diagram}
\end{figure}

The confluence results of~\cite{KN13,SM23} have been implemented in
\crest.\footnote{\url{http://cl-informatik.uibk.ac.at/software/crest/}}
The tool is currently under heavy development, not only to
incorporate the results in this paper but also termination and completion
techniques. Confluence of LCTRSs is a new category in the upcoming
edition of the Confluence
Competition\footnote{\url{https://project-coco.uibk.ac.at/2024/}}
and we expect to present experimental results obtained with {\crest}
at the conference.

For TRSs numerous other confluence techniques, not based on restricted
joinability conditions of critical pairs, as well as sufficient
conditions for non-confluence are known~\cite{A13,HNvOO19,SH22,ZFM15}.
We plan to investigate which
techniques generalize to LCTRSs with our transformation.
The transformation also makes the formal verification of confluence
criteria for LCTRSs in a proof assistant a more realistic goal.

\begin{credits}
\subsubsection{\ackname}
The detailed feedback of the reviewers improved the presentation.
\subsubsection{\discintname}
The authors have no competing interests to declare that are relevant to
the content of this article.
\end{credits}

\bibliographystyle{splncs04}
\bibliography{references}

\end{document}